\Crefname{observation}{Observation}{Observations}
\Crefname{assumption}{Assumption}{Assumptions}
\Crefname{problem}{Problem}{Problems}
\spnewtheorem{observation}[theorem]{Observation}{\bfseries}{\normalfont}
\spnewtheorem{assumption}[theorem]{Assumption}{\bfseries}{\normalfont}
\spnewtheorem{algorithm}[theorem]{Algorithm}{\bfseries}{\normalfont}
\newcommand{\HC}{H}
\newcommand{\bigO}{O}
\newcommand{\ps}{pre-stay}
\newcommand{\las}{length assignment}
\newcommand{\dis}{displacement}
\newcommand{\depo}{v^*}
\newcommand{\ROS}{\textsc{ROS}}
\newcommand{\ROSUPT}{\textsc{ROS-UET}}
\newcommand{\sched}{S}
\newcommand{\loc}{\text{loc}}
\newcommand{\dist}{\text{dist}}
\newcommand{\Js}{\mathcal J}
\newcommand{\Ms}{\mathcal M}
\newcommand{\njobs}{n}
\newcommand{\nmach}{m}
\newcommand{\nverts}{g}
\newcommand{\jobsin}[1]{n_{#1}}
\newcommand{\arr}{a}
\newcommand{\dep}{b}
\newcommand{\stops}{s}
\newcommand{\Pro}{P}
\newcommand{\Gr}{G}
\def\pvstays{\ensuremath{{m+2g-3}}}
\def\totstays{\ensuremath{{m^2+(2g-3)m}}}
\title{A fixed-parameter algorithm for a routing open shop problem: unit processing times, few machines and locations\thanks{A preliminary version of this work appeared at CSR'16 \citep{BP16}.  This version provides simpler proofs, a faster algorithm that runs in \(O(n\log n)\)~time for fixed \(m\) and~\(g\), and replaces the incorrect upper bound given in Lemma~5.5 of the old version by an asymptotically stronger upper bound.}}
\titlerunning{A fixed-parameter algorithm for routing open shop with unit processing times}
\author{René van Bevern\thanks{René van Bevern is supported
by the Russian Foundation for Basic Research, grant~16-31-60007 mol\textunderscore{}a\textunderscore{}dk.}
\and
Artem V.\ Pyatkin\thanks{Artem V.\ Pyatkin is supported
by the Russian Foundation for Basic Research, grants~17-01-00170 and~15-01-00976.}}
\institute{René van Bevern \and Artem V.\ Pyatkin \at Novosibirsk
  State University, Novosibirsk, Russian Federation,
  \email{rvb@nsu.ru, artem@math.nsc.ru} \and René van Bevern \and
  Artem V.\ Pyatkin \at Sobolev Institute of Mathematics of the
  Siberian Branch of the Russian Academy of Sciences, Novosibirsk,
  Russian Federation}
\date{Submitted: \today{}}
\begin{document}

\maketitle
\begin{abstract}
  \noindent\looseness=-1
  The open shop problem is
  to find a minimum makespan schedule
  to process each job~\(J_i\)
  on each machine~\(M_q\)
  for $p_{iq}$~time
  such that,
  at any time,
  each machine processes at most one job
  and
  each job is processed by at most one machine.
  We study a problem variant in which the jobs are located
  in the vertices
  of an edge-weighted graph.
  The weights determine the time needed
  for the machines
  to travel between jobs in different vertices.
  We show that the problem with \(m\)~machines and
  \(n\)~unit-time jobs in \(g\)~vertices
  is solvable in \(2^{\smash{\bigO(\nverts\nmach^2\log\nverts\nmach)}}+O(mn\log n)\)~time.
  \keywords{%
    routing\and
    scheduling\and
    setup times\and
    UET\and
    NP-hard problem
  }
\end{abstract}
\section{Introduction} 
\noindent
\citet{GS76} introduced the open shop problem:
given
a set~$\Js:=\{J_1,\dots,J_n\}$ of jobs,
a set~$\Ms:=\{M_1,\dots,M_m\}$ of machines,
and the processing time~$p_{iq}$ that job~$J_i$ needs on machine~$M_q$,
the goal is
to process each job on each machine
in a minimum total amount of time
such that
each machine processes at most one job at a time
and
each job is processed by at most one machine at a time.

\citet{ABC06} introduced a problem variant
where the jobs are located
in the vertices of an edge-weighted graph.
The edge weights determine
the time needed for the machines to travel between jobs in different vertices.
In their setting, all machines have the same travel speed
and
the travel times are symmetric.
Initially,
the machines are located in a vertex called the \emph{depot}.
The goal is to minimize
the time needed for processing
each job by each machine and
returning all machines to the depot.
This problem variant models, for example, scenarios
where machines or specialists
have to perform maintenance work
on objects in several places.
The travel times have also been interpreted
as sequence-dependent machine-independent
batch setup times \citep{ANCK08,ZW06}.

In order to formally define the problem,
we first give a formal model for a transportation network
and machine routes.
Throughout this work,
we use~\(\mathbb N:=\{0,1,\dots\}\).

\begin{definition}[network, depot, routes]
  A \emph{network}~\(\Gr=(V,E,c,v^*)\) consists of
  a set~\(V\) of \emph{vertices} and
  a set~\(E\subseteq\{\{u,v\}\mid u\ne v, u\in V, v\in V\}\) of \emph{edges}
  such that \((V,E)\) is a connected simple graph,
  travel times~$c\colon E\to\mathbb N\setminus\{0\}$,
  and a vertex~\(v^*\in V\) called the \emph{depot}.
  We denote the number of vertices in a network
  by~\(\nverts\). 
  
  A \emph{route with $\stops$~stays} is
  a sequence~$R:=(R_k)_{k=1}^\stops$
  of \emph{stays}~$R_k=(\arr_k,v_k,\dep_k)\in\mathbb N\times V\times\mathbb N$
  from time~$\arr_k$ to time~$\dep_k$
  in vertex~$v_k$ for $k\in\{1,\dots,\stops\}$
  such that \(v_1=v_\stops=\depo\), \(a_1=0\),
  \begin{align*}
    a_k&\leq b_k&&\text{for all }k\in\{1,\dots,\stops\},\\
    \{v_k,v_{k+1}\}&\in E&&\text{for all }k\in\{1,\dots,\stops-1\}\text{, and}\\
    a_{k+1}&=\dep_k+c(v_k,v_{k+1})&&\text{for all }k\in\{1,\dots,\stops-1\}.
  \end{align*}
  The \emph{length} of~$R$ is the end~$\dep_\stops$ of the last stay.  
\end{definition}

\noindent
\looseness=-1
Note that a route is actually fully determined
by the \(v_k\) and \(b_k\) for each~\(k\in\{1,\dots,\stops\}\),
yet it will be convenient to refer to both
\emph{arrival time~\(a_k\)} and \emph{departure time~\(b_k\)} directly.
We now define the routing open shop variant (ROS) introduced by \citet{ABC06}.

\begin{definition}[\ROS{}]\label[definition]{def:ros}
  An \emph{instance}~\(I=(\Gr,\Js,\Ms,\allowbreak \loc,\Pro)\)
  of the \ROS{} problem
  consists of a network~$\Gr=(V,E,c,v^*)$,
  a set $\Js=\{J_1,\dots,J_n\}$ of~\emph{jobs},
  a set $\Ms=\{M_1,\dots, M_m\}$ of~\emph{machines},
  job \emph{locations}~$\loc\colon \Js\to V$,
  and an \((n\times m)\)-matrix~\(\Pro=(p_{iq})\)
  determining the \emph{processing time}~$p_{iq}\in\mathbb N$
  of each job~$J_i$ on each machine~$M_q$.

  \looseness=-1
  A \emph{schedule} $\sched\colon \Js{}\times\Ms{}\to\mathbb N$ is
  a function determining the \emph{start time}~$\sched(J_i, M_q)$
  of each job~$J_i$ on each machine~$M_q$.
  A job~$J_i$ is \emph{processed}
  by a machine~$M_q$
  in the half-open time interval~$[\sched(J_i,M_q),\sched(J_i,M_q)+p_{iq})$.
  A schedule is \emph{feasible} if and only if
  \begin{enumerate}[(i)]
  \item\label{def:feas2} no machine~$M_q$ processes two jobs~$J_i\ne J_j$ at the same time, that is, $\sched(J_i,M_q)+p_{iq}\leq \sched(J_j,M_q)$ or $\sched(J_j,M_q)+p_{jq}\leq \sched(J_i,M_q)$ for all jobs~$J_i\ne J_j$ and machines~$M_q$,
  \item\label{def:feas3} no job~$J_i$ is processed by two machines~$M_q\ne M_r$ at the same time, that is, $\sched(J_i,M_q)+p_{iq}\leq \sched(J_i,M_r)$ or $\sched(J_i,M_r)+p_{ir}\leq \sched(J_i,M_q)$ for all jobs~$J_i$ and machines~$M_q\ne M_r$, and
  \item\label{def:feas4} there are routes~\((R^q)_{M_q\in\Ms}\)
    \emph{compatible with~\(\sched\)}, that is,
    for each job~$J_i$ and each machine~$M_q$
    with route~$R^q=(R_k^q)_{k=1}^{s_q}$,
    there is a $k\in\{1,\dots,\stops\}$
    such that
    $R_k^q=(\arr_k^q,\loc(J_i),\dep_k^q)$
    with $\arr_k^q\leq \sched(J_i,M_q)$ and $\sched(J_i,M_q)+p_{iq}\leq\dep_k^q$.
  \end{enumerate}
  The \emph{makespan} of a feasible schedule~\(\sched\)
  is the minimum value~\(L\)
  such that
  there are routes~$(R_{M_q})_{M_q\in\Ms}$ compatible with~\(S\)
  and each having length at most~\(L\).
  An \emph{optimal solution} to \ROS{}
  is a feasible schedule with minimum makespan.
\end{definition}

\noindent
\looseness=-1
Note that a schedule for \ROS{} only
determines the start time of each job on each machine,
not the times and destinations of machine movements.
Yet the start times of each job on each machine
fully determine the order
in which each machine processes its jobs.
Thus,
one can easily construct compatible routes if they exist:
each machine simply takes the shortest path
from one job to the next
if they are located in distinct vertices.

\paragraph{Preemption and unit processing times} 
The open shop problem is NP-hard even in the special cases
of $\nmach=3$~machines \citep{GS76} or if all processing times are one or two \citep{KSS12}.
Naturally, these results transfer to \ROS{} with
with $\nverts=1$~vertex.
\ROS{} remains (weakly) NP-hard
even for $\nverts=\nmach=2$ \citep{ABC06}
and there are approximation algorithms
both for this special and the general case of \ROS{}
\citep{ABC05,YLWF11,CKS13,Kon15}.
However, the open shop problem is solvable in polynomial time if
\begin{enumerate}[(1)]
\item\label{case1}job preemption is allowed, or
\item\label{case2}all jobs~$J_i$ have unit processing time~$p_{iq}=1$  on all machines~$M_q$.
\end{enumerate}

\looseness=-1
\noindent
It is natural to ask how these results
transfer to \ROS{}.
Regarding~\eqref{case1},
\citet{PCh12} have shown that
\ROS{} with allowed preemption
is solvable in polynomial time
if $\nverts=\nmach=2$,
yet NP-hard for $\nverts=2$ and
an unbounded number~$\nmach$ of machines.
Regarding \eqref{case2}, our work studies
the following special case of \ROS{} with unit execution times (\ROSUPT):

\begin{problem}
  By \ROSUPT{}, we denote \ROS{} restricted to instances
  where each job~$J_i$ has unit processing time~$p_{iq}=1$ on each
  machine~$M_q$.
\end{problem}

\noindent
\ROSUPT{} models scenarios where machines or specialists
process batches of roughly equal-length jobs
in several locations and
movement between the locations
takes significantly longer
than processing each individual job in a batch.
\ROSUPT{} is NP-hard even for $\nmach=1$ machine
since it generalizes the metric travelling salesman problem.
It is not obvious whether it is solvable in polynomial time
even when both \(\nverts\) and~\(\nmach\) are fixed.
We show that, in this case,
\ROSUPT{} is solvable even in \(O(n\log n)\)~time, that is, \ROSUPT{} is
\emph{fixed-parameter tractable} parameterized by~\(g+m\).

\paragraph{Fixed-parameter algorithms}
Fixed-parameter algorithms are an approach towards
efficiently and optimally solving NP-hard problems:
the main idea is to accept the exponential running time
for finding optimal solutions,
yet to confine it
to some small problem parameter~$k$.
A problem with parameter~$k$
is called \emph{fixed-parameter tractable~(FPT)}
if there is an algorithm that solves any instance~$I$
in $f(k)\cdot|I|^{O(1)}$~time,
where $f$~is an arbitrary computable function.
The corresponding algorithm is called \emph{fixed-parameter algorithm}.  
For more detail,
we refer the reader to the recent text book by \citet{CFK+15}.

\looseness=-1
Note that a fixed-parameter algorithm
running in \(O(2^k\cdot|I|)\)~time
runs in polynomial time for~\(k\in O(\log |I|)\),
whereas an algorithm with running time~\(O(|I|^k)\)
runs in polynomial time only if \(k\)~is constant.
The latter algorithm is not a fixed-parameter algorithm.

Recently, the field of fixed-parameter algorithmics
has shown increased interest in scheduling and routing
\citep{CMY+17,BBB+16b,BCH+15,BMNW15,BNS16,BF95,FM03,HK06,
  HKS+15,MW15,BKS17,BNSW14,DMNW13,GWY17,GJW16,GJS17,GMY13,JMS17,KM14,
  SBNW11,SBNW12},
whereas fixed-parameter algorithms for problems
containing elements of both routing and scheduling
seem rare \citep{BHKK07}.

\paragraph{Input encoding}
Encoding a \ROS{} instance
requires \(\Omega(\njobs\cdot\nmach+g)\)~bits
in order to encode the processing time of each of \(n\)~jobs
on each of \(m\)~machines
and the travel time along each of at least \(g-1\)~edges.
We call this the \emph{standard encoding}.
In contrast, a \ROSUPT{} instance
can be encoded using
\(\bigO(\nverts^2\cdot\log c_{\max}+\nverts\cdot\log\njobs)\) bits
by encoding only the number of jobs in each vertex, where \(c_{\max}\) is the maximum travel time.  We call this the \emph{compact encoding}.

All running times in this article are stated
for computing a minimum makespan schedule,
whose encoding requires \(\Omega(\njobs\cdot\nmach)\)~bits
for the start time of each job on each machine.
Thus,
outputting the schedule is impossible
in time polynomial in the size of the compact encoding.
We therefore assume
to get the input instance in standard encoding,
like for the general \ROS{} problem.

However, we point out that the \emph{decision version} of \ROSUPT{}
is fixed-parameter tractable
parameterized by~\(\nverts+\nmach\)
even when assuming the compact encoding:
our algorithm can decide whether
there \emph{exists} a schedule of given makespan~\(L\)
in \(2^{\smash{\bigO(\nverts\nmach^2\log\nverts\nmach)}}\cdot|I|\)~time
if \(I\)~is a \ROSUPT{} instance given in compact encoding
(when replacing line~\ref{lin:complsched} of \cref{alg:outline}
by ``return $L$'', it will simply output the minimum makespan
instead of constructing the corresponding schedule).

\paragraph{Organization of this work}
In \cref{sec:preprop},
we apply some basic preprocessing
that allows us to
assume  that travel times
satisfy the triangle inequality.
In \cref{sec:ubo},
we prove upper and lower bounds
on the minimum makespan of schedules
and on the number and length of stays
of routes compatible with some optimal schedule.
In \cref{sec:fpt}, we present our fixed-parameter algortihm.

\section{Preprocessing for metric travel times}\label{sec:preprop}
\noindent
In this section,
we transform \ROS{} instances
into equivalent instances
with travel times
satisfying the triangle inequality.
This will allow us to assume that,
in an optimal schedule,
a machine only stays in a vertex
if it processes at least one job there:
otherwise, it could take a ``shortcut'',
bypassing the vertex.

\begin{lemma}\label[lemma]{lem:metric}
  Let \(I\)~be a \ROS{} instance
  and
  let \(I'\)~be obtained from~\(I\)
  by replacing the network~\(\Gr=(V,E,c,v^*)\) in~\(I\)
  by the network~\(\Gr'=(V,E',c',v^*)\)
  such that
  \((V,E')\) is a complete graph and
  \(c':\{v,w\}\mapsto\dist_c(v,w)\),
  where \(\dist_c(v,w)\)~is the length
  of a shortest path
  between~\(v\) and~\(w\) in~\(\Gr\) with respect to~\(c\).

  Then,  any feasible schedule for~\(I\)
  is a feasible schedule for \(I'\)
  with the same makespan and vice versa.
\end{lemma}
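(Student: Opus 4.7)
The plan is to prove the equivalence by explicitly converting compatible routes in one network into compatible routes in the other while preserving total length. Since the feasibility conditions~\eqref{def:feas2} and~\eqref{def:feas3} of \Cref{def:ros} involve only the schedule~$S$ and the processing-time matrix~$\Pro$, which are identical in~$I$ and~$I'$, they hold with respect to~$I$ if and only if they hold with respect to~$I'$. Moreover, the makespan is defined as the smallest length of compatible routes. Hence it is enough to show that, for every~$L\in\mathbb N$, routes of length at most~$L$ compatible with~$S$ exist in~$\Gr$ if and only if such routes exist in~$\Gr'$.

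For the direction from~$\Gr$ to~$\Gr'$, I would take compatible routes $(R^q)_{M_q\in\Ms}$ in~$\Gr$ and build routes in~$\Gr'$ that keep the same sequence of vertices~$v^q_k$ and the same departure times~$b^q_k$, only recomputing the arrival times via~$c'$. Because $c'(v,w)\le c(v,w)$ for every edge $\{v,w\}\in E$, each new arrival time is no later than the original one, the constraints $a^q_k\le b^q_k$ continue to hold, and the stay $[a^q_k,b^q_k]$ still covers every processing interval it originally covered. The final departure~$b^q_{s_q}$ is unchanged, so the route length is preserved.

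For the direction from~$\Gr'$ to~$\Gr$, which is the more delicate one, I would expand each shortcut edge $\{v^q_k,v^q_{k+1}\}\in E'\setminus E$ used by a compatible route in~$\Gr'$ into a shortest path $v^q_k=w_0,w_1,\dots,w_t=v^q_{k+1}$ in~$\Gr$ whose total $c$-length equals $c'(v^q_k,v^q_{k+1})$. At each intermediate vertex~$w_j$, I would insert an instantaneous stay (with coinciding arrival and departure time) whose timings are set so that the aggregate travel time from~$v^q_k$ to~$v^q_{k+1}$ in the expanded route still equals $c'(v^q_k,v^q_{k+1})$. By induction along the route, the arrival and departure times at the original stays remain untouched, so the expanded route is compatible with~$S$ and has the same length as its counterpart in~$\Gr'$.

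I expect the main obstacle to lie in the bookkeeping for the second direction: one must check that expansions of consecutive shortcut edges splice together into a single valid route satisfying $a_{k+1}=b_k+c(v_k,v_{k+1})$ throughout, and that the newly inserted instantaneous stays are never relied upon to witness condition~\eqref{def:feas4} for any job. The latter is automatic, since every job is already witnessed by a retained original stay of the route in~$\Gr'$, whose vertex and timings are preserved verbatim by the expansion.
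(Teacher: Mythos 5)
Your proposal follows essentially the same route as the paper: the direction from \(\Gr\) to \(\Gr'\) is immediate because edge weights only decrease, and the direction from \(\Gr'\) back to \(\Gr\) is handled by expanding travel legs into shortest \(c\)-paths with zero-length intermediate stays, leaving the arrival and departure times of the original stays untouched and hence preserving compatibility and route length. One detail needs fixing, though: you expand only edges in \(E'\setminus E\), but an edge \(\{v,w\}\in E\) may itself satisfy \(c(\{v,w\})>\dist_c(v,w)=c'(\{v,w\})\); if the route in \(\Gr'\) traverses such an edge and you keep it unexpanded, the route constraint \(a_{k+1}=b_k+c(v_k,v_{k+1})\) forces a strictly later arrival in \(\Gr\), so the timings of the original stays are \emph{not} preserved verbatim, and either \(a_{k+1}\le b_{k+1}\), the compatibility condition for a job started right at arrival, or the length bound can break. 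The remedy is exactly what the paper does: between \emph{every} pair of consecutive stays, insert a shortest \(c\)-path of length \(c'(v_k,v_{k+1})\) with zero-length stays at its internal vertices, regardless of whether \(\{v_k,v_{k+1}\}\in E\).
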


\begin{proof}
  Clearly, any feasible schedule~\(S\) of makespan~\(L\) for~\(I\)
  is a feasible schedule of makespan at most~\(L\) for~\(I'\). %
  We show that any feasible schedule~\(S'\) of makespan~\(L\) for~\(I'\)
  is a feasible schedule of makespan at most~\(L\) for~\(I\).
  This is because, for any route~\(R'\) compatible with~\(S'\)
  in~\(I'\),
  we get a route~\(R\) of the same length compatible with~\(S'\) in~\(I\):
  between each pair of consecutive
  stays~\((a_i,v_i,b_i)\) and~\((a_{i+1},v_{i+1},b_{i+1})\) on~\(R'\)
  and a path~\(P=(w_1=v_i,w_2,\dots,w_\ell=v_{i+1})\)
  of length~\(c'(v_i,v_{i+1})\)
  with respect to~\(c\)  in~\(\Gr\),
  add zero-length stays in the vertices
  \(w_2,w_3,\dots,w_{\ell-1}\)
  of~\(P\).  This yields a route~\(R\) for~\(I\)
  of the same length as~\(R'\) since the end of the last stay has not changed.
\qed\end{proof}

\noindent
The travel times~\(c'\)
in the network~\(\Gr'\) created in \Cref{lem:metric}
satisfy the triangle inequality.
Thus,
one can assume that, except for the depot, a machine
never visits a vertex of~\(\Gr'\) that has no jobs.
Since one can delete such vertices,
in the following,
we will make the
following simplifying assumption without loss of generality.

\begin{assumption}\label[assumption]{obs:delverts}
  Let \(I\)~be a \ROS{} instance
  on a network~\(\Gr=(V,E,c,v^*)\).
  Then,
  \begin{enumerate}[(i)]
  \item\label{ass1} the travel times~\(c\) satisfy the triangle inequality,
  \item\label{ass2} each vertex~\(v\in V\setminus\{v^*\}\) contains at least one job.
  \end{enumerate}
\end{assumption}

\section{Upper and lower bounds on makespan, number and lengths of stays}\label{sec:ubo}
\noindent
In this section,
we show lower and upper bounds on the makespan
of optimal \ROSUPT{} schedules,
as well as
on the number and the lengths of stays
of routes compatible with optimal schedules.
These will be exploited
in our fixed-parameter algorithm.

By \cref{obs:delverts}\eqref{ass1},
the travel times in the network~\(G\) of a \ROSUPT{} instance
satisfy the triangle inequality.
Thus, the minimum cost of a cycle
visiting each vertex of~\(\Gr\) \emph{at least once}
coincides with the minimum cost of a cycle
doing so \emph{exactly once} \citep{Ser78},
that is, with that of a minimum-cost \emph{Hamiltonian cycle}.

A trivial lower bound
on the makespan of optimal \ROSUPT{} schedules
is given by the fact that,
in view of \cref{obs:delverts}\eqref{ass2},
each machine has to visit
all vertices at least once
and has to process $\njobs$~jobs.
A trivial upper bound is given by
the fact that the machines can process the jobs sequentially.
We thus obtain the following:
\begin{observation}\label[observation]{obs:lobo}
  Let \(I\)~be a \ROSUPT{} instance on a network~$\Gr=(V,E,c,v^*)$
  with a minimum-cost Hamiltonian cycle~\(\HC\).
   Then, the makespan of an optimal schedule to~\(I\) lies in
   \(
    \{c(\HC)+\njobs,\dots,c(\HC)+\njobs+\nmach-1\}.
    \)

    Moreover, a schedule with makespan $c(\HC)+\njobs+\nmach-1$ for
    \ROSUPT{} is computable in
    \(\bigO(\njobs\cdot\nmach)\)~time if a Hamiltonian
    cycle~\(\HC\) for the network of the input
    instance is also given as input.
\end{observation}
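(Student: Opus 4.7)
The plan is to establish the lower bound by a per-machine workload accounting argument and the upper bound by constructing an explicit staggered schedule along the given Hamiltonian cycle.

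For the lower bound, observe that every non-depot vertex contains a job by \cref{obs:delverts}\eqref{ass2}, and every machine must process every job. Therefore, each machine's route must start at \(\depo\), visit every other vertex at least once, and return to \(\depo\). By \cref{obs:delverts}\eqref{ass1} and the preceding remark (using Serdyukov's observation), the minimum length of such a closed walk equals~\(c(\HC)\). Independently, each machine spends exactly \(\njobs\) units of time processing jobs. Since, on each individual machine, processing and travelling cannot overlap, the length of its route is at least \(c(\HC)+\njobs\), which then lower-bounds the makespan.

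For the upper bound, I would exhibit a staggered schedule along~\(\HC\). Let \(\depo=v_0,v_1,\dots,v_{\nverts-1},v_0\) be the cyclic order of vertices in~\(\HC\), let \(\jobsin{v}\) denote the number of jobs located at vertex~\(v\), and let \(t_i\) be defined by \(t_0=0\) and \(t_{i+1}=t_i+\jobsin{v_i}+c(v_i,v_{i+1})\), so that \(t_\nverts=c(\HC)+\njobs\). Fix an arbitrary order \(\sigma_v\) of the jobs in each vertex~\(v\). Machine~\(M_q\) traverses~\(\HC\) but delays everything by \(q-1\) time units, and in vertex~\(v_i\) processes the \(j\)-th job there in the time interval \([t_i+(q-1)+(j-1),\,t_i+q+(j-1))\). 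The last machine returns to the depot at time \(t_\nverts+(m-1)=c(\HC)+\njobs+\nmach-1\).

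It remains to check feasibility of this schedule. Two start times on the same machine \(M_q\) differ by a positive integer because either they lie in different vertices (separated by a positive travel time) or they correspond to different jobs in the same vertex (separated by \(\sigma_v\)), giving \eqref{def:feas2}. Two machines \(M_q\neq M_r\) that process the same job~\(J\in\sigma_v^{-1}(j)\) do so at times \(t_i+(q-1)+(j-1)\) and \(t_i+(r-1)+(j-1)\), which differ by \(|q-r|\geq 1\geq p_{iq}\), giving \eqref{def:feas3}. The routes compatible with the schedule are the \(m\)~shifted copies of~\(\HC\), each with a stay of length~\(\jobsin{v_i}\) in vertex~\(v_i\), yielding \eqref{def:feas4}.

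Finally, the schedule can be written down in \(\bigO(\njobs\nmach)\) time: one pass along~\(\HC\) computes the arrival times \(t_0,\dots,t_{\nverts-1}\) in \(\bigO(\nverts)\subseteq\bigO(\njobs)\) time, and then each of the \(\njobs\nmach\) start times is produced by the closed-form expression above. The main obstacle is purely bookkeeping: one must ensure that the staggered timing simultaneously rules out machine-conflict and job-conflict across all vertices, which is precisely what the uniform shift of \(q-1\) units guarantees.
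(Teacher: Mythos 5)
Your proof is correct and follows essentially the same route as the paper: the lower bound comes from each machine having to travel at least \(c(\HC)\) (by the triangle inequality and the remark that a minimum-cost closed walk through all vertices costs \(c(\HC)\)) while processing \(\njobs\) unit jobs, and the upper bound is exactly the paper's ``process the jobs sequentially'' schedule, made explicit as machines following \(\HC\) staggered by one time unit each. The only cosmetic point is that a shifted copy of the route for machine \(M_q\) should have its first depot stay begin at time \(0\) (the machine simply waits there until time \(q-1\)) so that the formal route definition \(a_1=0\) is met, which changes nothing in the makespan bound or the \(\bigO(\njobs\nmach)\) running time.
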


\noindent
We can improve the upper bound
on the makespan
if \(c(H)<min\{\njobs,\nmach\}\):
\begin{proposition}\label[proposition]{thm:artem}
  A feasible schedule of length \(2c(\HC)+\max\{\njobs,\nmach\}\)
  for \ROSUPT{} is computable in \(\bigO(m^2+mn+\nverts)\)~time
  if a Hamiltonian
  cycle~\(\HC\) for the network of the input
  instance is also given as input.
\end{proposition}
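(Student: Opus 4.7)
The plan is to construct the schedule explicitly by combining the given Hamiltonian cycle~$\HC$ with the classical cyclic round-robin open shop schedule for unit processing times. Set $N := \max\{\njobs,\nmach\}$, write $\HC$ as a cyclic sequence $v_0 = \depo, v_1, \ldots, v_{\nverts-1}$, and index the jobs $J_0, J_1, \ldots, J_{\njobs-1}$ in the order~$\HC$ encounters their vertices so that jobs located at the same vertex receive consecutive indices. Assign each machine $M_q$ and job $J_i$ the cyclic slot $\pi(q,i) := (q - 1 + i) \bmod N$; on its own this is the standard conflict-free cyclic open shop schedule on unit jobs, and the processing order of every machine is a cyclic rotation of one common Hamiltonian vertex sequence.

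Build $M_q$'s route as the natural realisation of this rotation: $M_q$ leaves $\depo$ along a shortest path to the vertex $v_{a_q}$ of its first processed job, walks along $\HC$ in the fixed cyclic direction while processing its jobs vertex by vertex, and finally returns along a shortest path from the vertex $v_{b_q}$ of its last processed job back to $\depo$. Because the cyclic rotation wraps around exactly once, $v_{a_q}$ and $v_{b_q}$ are either equal or consecutive on~$\HC$, so bounding each of the entry and exit legs by a suitable arc of~$\HC$ (available thanks to the triangle inequality in~\cref{obs:delverts}) and using that the middle traversal has length $c(\HC) - c(v_{b_q}, v_{a_q})$ yields total travel at most $2c(\HC)$. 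With $\njobs$ unit processing steps per machine and at most $N - \njobs$ idle slots when $\njobs < \nmach$, the per-machine makespan is at most $2c(\HC) + N = 2c(\HC) + \max\{\njobs, \nmach\}$.

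The main obstacle I expect is feasibility: once the logical slots~$\pi(q,i)$ are translated into actual start times via the travel offsets, I must verify that no two machines clash on a job. My plan is to exploit that every machine performs the same cyclic sequence of vertex transitions, so the cumulative time from its first processing step to any later one is machine-independent; the $(q' - q) \bmod N$ relative shift built into the cyclic open shop schedule is therefore preserved throughout, and conflict-freeness transfers from the underlying open shop schedule to the routed one. The running time falls out of the construction: $O(\nverts + \njobs)$ to preprocess~$\HC$ and the job-to-vertex assignment, $O(\njobs + \nmach)$ per machine to emit its stays and start times from the shared template and its offset, and $O(\nmach^2)$ for bookkeeping the $\nmach$ offsets, totalling $O(\nmach^2 + \nmach\njobs + \nverts)$.
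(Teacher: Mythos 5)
Your construction is genuinely different from the paper's: you let each machine enter the cycle at the vertex of its own first job and traverse \(\HC\) essentially once, whereas the paper sends \emph{every} machine around \(\HC\) twice starting from the depot, processing the jobs \(J_i\) with \(i\ge q\) on the first pass and those with \(i<q\) on the second. This difference is exactly where your feasibility step breaks down. The invariant you rely on---``the cumulative time from its first processing step to any later one is machine-independent''---is false: all machines do follow the same \emph{cyclic} sequence of vertex transitions, but they start at different points of it, so they accumulate the inter-job travel legs in different orders. (Example: depot with one job, a second vertex at distance \(t\) with two jobs; the machine starting at the depot job reaches its third slot after \(2+t\) time, the machine starting at the last job of the second vertex reaches its third slot after \(2+2t\) time.) Consequently, for a fixed job \(J_i\) the difference between its start times on two machines is \emph{not} the cyclic shift \((q'-q)\bmod N\) up to a machine-independent offset; it is perturbed by amounts such as \(\dist(v^*,\loc(J_{i_q}))-\dist(v^*,\loc(J_{i_{q'}}))\) plus the travel along the cycle segment between the two machines' starting jobs (or minus a full lap, depending on where \(J_i\) lies), so conflict-freeness does not simply ``transfer'' from the unit-time open shop schedule. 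As written, the central feasibility claim is unproved and its stated justification is wrong.

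The construction itself can be salvaged, but only with an argument you do not give: one must show, via the triangle inequality, that for any two machines the shortest-path entry leg from the depot of the one starting ``later'' on the cycle is at most the other's entry leg plus the along-the-route travel between their starting jobs (and the symmetric statement for the wrap-around case); this yields that all pairwise start-time gaps on a common job remain at least one, even though they are no longer the original shifts. Alternatively one can synchronize all machines to begin their first job at a common time, again paying travel plus waiting at most \(2c(\HC)\). The paper sidesteps this case analysis entirely: since every machine uses the same vertex-dependent offsets \(c_k\) (first pass) and \(c(\HC)+c_k\) (second pass), the value added to each cell of the cyclic Latin square is monotone along rows and columns, so distinctness of start times is preserved by a one-line argument, at the price of each machine travelling around \(\HC\) twice. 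Your travel bound of \(2c(\HC)\) per machine and your running-time accounting are fine (up to the minor point that when the depot holds no job the first and last job vertices of a machine need not be adjacent on \(\HC\), which only helps), but the key feasibility step is a genuine gap.
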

\begin{proof}
  Without loss of generality, assume that \(n\geq m\).  Otherwise, we
  can simply add
  $m-n$~additional jobs to the depot and finally remove them from the
  constructed schedule.  We will construct a feasible schedule~\(S\)
  of length~\(2c(\HC)+n\).

  Let~\(\HC=(v_1,v_2,\ldots, v_{\nverts},v_1)\),
  where \(v_1=v^* \)~is the depot.
  Without loss of generality,
  let the jobs~\(J_1,\dots,J_n\) be ordered so that,
  for jobs~\(J_i,J_j\) with~\(i\leq j\), one has \(\loc(J_i)=v_k\)
  and \(\loc(J_j)=v_\ell\) with~\(k\leq\ell\).
  That is, the first jobs are in~\(v_1\),
  then follow jobs in~\(v_2\), and so on.  
  We will construct our schedule from the
  matrix~\(S'=(s'_{iq})_{1\leq i\leq n,1\leq q\leq m}\), where
  \begin{align}
s'_{iq}&:=(i-q)\bmod n=
\begin{cases}
  n-q+i&\text{ if \(i< q\),}\\
  i-q&\text{ otherwise.}
\end{cases}\label{eq:shift}
  \end{align}
  \looseness=-1
  Figuratively, each row of~\(S'\) is a cyclic right-shift of the previous row.
Call a cell $s'_{iq}$ \textit{red} if $i<q$ and \textit{green} otherwise. Note that, if $s'_{iq}$ and $s'_{jr}$ are of the same color and $i<j$ or $r<q$, then $s'_{iq}<s'_{jr}$. Moreover, the number in a red cell is larger than the number in any green cell in the same row or column: if \(s'_{iq}\)~is red and \(s'_{jq}\) is green, then from
 \begin{align*}
   n+i&>j\text{\quad follows}\\
   s'_{iq}=n-q+i&>j-q=s'_{jq}
 \end{align*}
and if \(s'_{iq}\)~is red and \(s'_{ir}\)~is green, then from
\begin{align*}
  n-q&>-r\text{\quad follows}\\
s'_{iq}=n-q+i&>i-r=s'_{ir}.
\end{align*}
Let $c_k=\sum_{i=2}^{k}c(v_{i-1},v_i)$ be the travel time from $v_1$ to $v_k$ along~\(\HC\).  Clearly, the sequence~$(c_k)_{k=1}^s$ is non-decreasing and~$c_{\nverts}\leq c(\HC)$.  Our schedule~\(S\) is now given by
\[
S(J_i,M_q)=s_{iq}:=s'_{iq}+
\begin{cases}
  c_{k}&\text{ if $\loc(J_i)=v_k$ and $s_{iq}$ is green},\\
  c(\HC)+c_k&\text{ if $\loc(J_i)=v_k$ and $s_{iq}$ is red}.
\end{cases}
\]
Let us prove that this schedule is feasible in terms of \cref{def:ros}. Indeed, by construction, for two elements $s_{iq}$ and $s_{jr}$ with \(i=j\) or \(q=r\) and $s'_{iq}>s'_{jr}$, one has $s_{iq}>s_{jr}$ since the value added to $s'_{iq}$ is not smaller than the value added to $s'_{jr}$ due to our sorting of jobs by non-decreasing vertex indices and because the value added to any red cell is larger than any value added to a green cell.  Therefore, conditions~\eqref{def:feas2} and \eqref{def:feas3} are satisfied.

It remains to verify \eqref{def:feas4}, that is,
that there are compatible routes~\(R^q=((a_k^q,v_k^q,b_k^q))_{k=1}^{s_q}\)
for each machine~\(M_q\).
We let machine~\(M_q\) follow~\(\HC\) twice.
During the first stay~\((a_k^q,v_k^q,b_k^q)\) in a vertex~\(v_k^q\),
machine~\(M_q\) processes all jobs~\(J_i\)
such that \(s_{iq}\) is green.
During the second stay~\((a_{g+k}^q,v_k^q,b_{g+k}^q)\),
it processes all jobs~\(J_i\)
such that \(s_{iq}\)~is red.
By the choice of~\(s_{iq}\) for red cells, the machines have enough time to go around~\(H\) a second time. 
The length of the schedule is $n+2c(\HC)$:
each machine uses \(2c(\HC)\)~time for traveling,
\(n\)~time for processing the \(n\)~jobs,
and is never idle.
\qed\end{proof}

\noindent
\looseness=-1
The machines in the proof of
\cref{thm:artem} visit vertices twice.
The following example shows a \ROSUPT{} instance
for which \cref{thm:artem} computes an optimal schedule
and where the machines in an optimal schedule
\emph{have} to visit vertices repeatedly.
\begin{example}\label[example]{ex:badhamilton}
  Consider a \ROSUPT{} instance on a network~\(\Gr=(V,E,c,v^*)\)
  with two vertices~\(v^*=v_1\) and~\(v_2\)
  and one edge~\(e=\{v_1,v_2\}\) with \(c(e)=t\).
  Vertex~\(v^*\) contains one job~\(J_1\),
  vertex~\(v_2\) contains \(n-1\)~jobs~\(J_2,\dots,J_n\).
  A machine visiting~\(v_2\) only once
  either has to process first \(J_1\) and then \(J_2,\dots,J_n\),
  or first all of \(J_2,\dots,J_n\) and then~\(J_1\).
  
  Assume that we have \(m=m_1+m_2\)~machines,
  where a set~\(\Ms_1\) of \(m_1\)~machines processes~\(J_1\) first
  and a set~\(\Ms_2\) of \(m_2\)~machines processes~\(J_1\) last.
  Then one of the machines in~\(\Ms_1\) has to wait
  for the \(m_1-1\)~other machines in order to start~\(J_1\).
  Similarly, after finishing all jobs~\(J_2,\dots,J_n\),
  one of the machines in~\(\Ms_2\) has to wait
  for the \(m_2-1\)~other machines in order to start~\(J_1\).
  Thus, this schedule has makespan at least~\(2t+n+m/2-1\):
  there is a machine that spends
  \(2t\)~time for travelling, \(n\)~time for processing,
  and at least \(\max\{m_1-1,m_2-1\}\geq m/2-1\)~time for idling.

  \cref{thm:artem} gives a schedule with makespan
  \(4t+\max\{n,m\}\), which is smaller than \(2t+n+m/2-1\)
  if \(n\geq m\geq 4t-4\).
  Thus, in this instance, at least one machine in an optimal schedule
  has to visit \(v_2\)~twice, which incurs a travel time of~\(4t\).
  Since this machine also has to process \(n\)~jobs,
  it follows that the bound \(4t+\max\{n,m\}=4t+n\) given by \cref{thm:artem}
  is optimal in this case.
\end{example}

\noindent The above example shows that,
in an optimal solution to \ROSUPT{},
it can be necessary that machines visit vertices several times.
The following lemma gives and upper bound
on the number and length of stays of a machine in an optimal schedule.

\begin{lemma}\label[lemma]{lem:vislength}\label[lemma]{lem:shortcyc}
  Let \(S\)~be an optimal schedule
  for a \ROSUPT{} instance on a network~\(\Gr=(V,E,c,v^*)\).
  Let \(L\)~be the makespan of~\(S\) and \((R^q)_{M_q\in\Ms}\)~be routes
  of length at most~\(L\) compatible to~\(S\).
  Then, for each machine~\(M_q\),
  the route \(R^q=((a_k^q,v_k^q,b_k^q))_{k=1}^{s_q}\)
  \begin{enumerate}[(i)]
  \item has at most \(s_q\leq\pvstays\) stays, and
    
  \item the total length of the stays
    in any vertex~\(v\in V\) with \(n_v\)~jobs is
  \begin{align}
    \smashoperator{\sum_{1\leq k\leq s_q, v_k^q=v}}(b_k^q-a_k^q)\leq n_v+m-1.\label{slen}
  \end{align}
\end{enumerate}
\end{lemma}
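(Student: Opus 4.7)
The plan is to prove part~(ii) first by a short blocking argument and then derive part~(i) from (ii) combined with the triangle inequality from \Cref{obs:delverts}.

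For part~(ii), the key idea is to focus on a single ``witness'' job. Let $J^*$ be the job at $v$ that $M_q$ processes last (in time) among all jobs at $v$. At every moment when $M_q$ is at $v$ and not processing any job, $J^*$ is by definition still unprocessed by $M_q$. I would argue that, by the optimality of $S$, we may assume the compatible routes are chosen so that $J^*$ is being processed by some other machine at every such idle moment of $M_q$ at $v$; otherwise one could move $M_q$'s processing of $J^*$ to this earlier free moment (which creates no conflict, since neither $M_q$ nor any other machine uses $J^*$ then) and let $M_q$ leave $v$ sooner, yielding a feasible schedule of no greater makespan whose rerouting strictly reduces the idle time at~$v$. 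Since each of the $m-1$ other machines processes $J^*$ for exactly one unit of time during the whole schedule, the total idle time of $M_q$ at~$v$ is at most $m-1$. Adding the $n_v$ units of processing gives the claim $\sum_{k:v_k^q=v}(b_k^q-a_k^q)\le n_v+m-1$.

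For part~(i), I would combine (ii) with \Cref{obs:delverts}\eqref{ass1}: by the triangle inequality any zero-duration stay at a non-depot vertex can be bypassed without increasing the route length, so we may assume each non-depot stay has positive duration; then part~(ii) bounds the total time spent at each vertex. The bound $L\le c(H)+n+m-1$ from \Cref{obs:lobo} bounds the travel time available to~$M_q$. A structural argument on the walk $v_1^q,\dots,v_{s_q}^q$, showing that an optimal route cannot contain ``wasteful'' subwalks between two visits to the same vertex because such a subwalk can be replaced by a direct shortcut plus a rearrangement of the schedule, should then yield $s_q\le m+2g-3$.

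The principal obstacle is part~(i): whereas part~(ii) admits the clean blocking argument above, bounding the number of stays by $m+2g-3$ requires careful combinatorial accounting of how often a single vertex can be revisited in an optimal route. A natural attack is to charge each revisit of a vertex to an unresolved conflict between $M_q$ and some other machine, or to argue directly that the walk normalizes to at most two traversals of the Hamiltonian cycle plus $m-2$ localized detours, with each detour justified by a distinct conflict that prevents simple merging with an adjacent visit.
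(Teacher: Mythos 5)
There is a genuine gap in both parts. For part~(ii), your ``blocking'' argument hinges on the step ``we may assume the compatible routes are chosen so that \(J^*\) is being processed by some other machine at every idle moment of \(M_q\) at \(v\)''. This is not available to you: the lemma asserts the bound for \emph{every} optimal schedule \(S\) and \emph{every} collection of compatible routes of length at most \(L\), so you may not re-choose \(S\) or the routes. Moreover, your own exchange only produces a schedule ``of no greater makespan'', so the optimality of \(S\) yields no contradiction and nothing forces the blocking property to hold; turning this into a valid argument would require a careful potential-function induction that you do not carry out, and even then it would only prove an existential variant of the statement. The paper's proof of~(ii) is a one-line counting argument that avoids all of this: by \cref{obs:lobo}, \(L\leq c(\HC)+\njobs+\nmach-1\); the route \(R^q\) must spend at least \(c(\HC)\) time travelling (it visits every vertex, and travel times are metric by \cref{obs:delverts}) and exactly \(\njobs\) time processing, so if the stays in~\(v\) totalled at least \(n_v+m\), the machine would idle for at least \(m\) time units and \(R^q\) would have length at least \(c(\HC)+\njobs+\nmach>L\), a contradiction.

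For part~(i) you do not have a proof: you explicitly defer the ``careful combinatorial accounting'', and that accounting is where the entire difficulty lies. Your proposed reduction via~(ii) cannot work, since~(ii) bounds the \emph{time} spent in each vertex, not the \emph{number} of stays, and summing \(n_v+m-1\) over vertices gives a bound depending on~\(\njobs\), far above \(\pvstays\). The paper instead converts ``too many stays'' into ``too much travel time'': the vertex sequence of \(R^q\) is a closed walk through all \(\nverts\) vertices with \(s_q-1\) edges, and \cref{prop:longcyc} (proved in the appendix via Mader's theorem, through \cref{cor:shortcut}) shows that any such closed walk with \(2\nverts-2+k\) edges has weight at least \(c(\HC)+k\). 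Hence \(s_q\geq 2\nverts+\nmach-2\) forces travel time so large that, together with the \(\njobs\) processing units, the route length exceeds the makespan bound of \cref{obs:lobo}. The statements that a minimal closed spanning walk has at most \(2\nverts-2\) edges and that each additional edge costs at least one unit of weight are exactly the missing ingredients in your sketch; your heuristic of charging revisits to conflicts or normalizing to two traversals of \(\HC\) plus detours is not developed to the point where it delivers the bound \(\pvstays\).
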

\begin{proof}
  Let \(H\)~be a minimum-cost Hamiltonian cycle in~\(\Gr\).
  We show that, if one of (i) or (ii) is violated,
  then \(S\)~has makespan at least \(c(H)+m+n\),
  contradicting \cref{obs:lobo}.

  (i)
  If \(R^q\) had at least \(2g+m-2\)~stays,
  then,
  by \cref{prop:longcyc} in our graph-theoretic \cref{appendix},
  machine~\(M_q\) would be traveling
  for at least \(c(H)+m\)~time.
  Since the machine~\(M_q\) is processing jobs
  for \(n\)~time units,
  the makespan of \(S\) is at least \(c(H)+m+n\),
  a contradiction.
  
  (ii)
  Machine~\(M_q\) takes at least \(c(H)+n\)~time just
  for visiting all vertices and processing all jobs.
  If \eqref{slen} does not hold,
  then machine~\(M_q\) is neither travelling nor processing
  for at least \(m\)~time units.
  Thus, the length of route~\(R^q\) and,
  therefore,
  the makespan of \(S\),
  is at least \(c(H)+n+m\), again a contradiction.
  \qed
\end{proof}

\section{Fixed-parameter algorithm}\label{sec:fpt}
\noindent 
In this section, we present a fixed-parameter algorithm for \ROSUPT{}.
The following simple algorithm shows
that our main challenge will be ``bottleneck vertices'' that
contain less jobs than there are machines.

\begin{proposition}\label[proposition]{thm:trivsched}
  \ROSUPT{} is solvable in \(O(2^gg^2+mn)\)~time if each vertex
  contains at least \(m\)~jobs.
\end{proposition}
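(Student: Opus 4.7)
The plan is to achieve the lower bound of \(c(\HC)+n\) from \cref{obs:lobo} by synchronizing all machines along a minimum-cost Hamiltonian cycle. Since the travel times satisfy the triangle inequality by \cref{obs:delverts}, such a cycle \(\HC=(v_1,v_2,\ldots,v_g,v_1)\) with \(v_1=v^*\) can be computed by the Held--Karp dynamic program in \(O(2^g g^2)\)~time.

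I would then let every machine follow~\(\HC\) in the same direction with fully synchronized stays: all machines arrive at~\(v_k\) at time \(T_k:=c_k+\sum_{i<k}n_{v_i}\), where \(c_k\)~denotes the distance from~\(v^*\) to~\(v_k\) along~\(\HC\), and stay there for exactly \(n_{v_k}\)~time units while processing all jobs located at~\(v_k\). The local schedule at~\(v_k\) is a Latin-square-style cyclic shift: enumerating the \(n_{v_k}\)~jobs at~\(v_k\) arbitrarily as \(J^{(k)}_0,\ldots,J^{(k)}_{n_{v_k}-1}\), machine~\(M_q\) processes~\(J^{(k)}_i\) in the unit interval starting at time \(T_k+((i+q)\bmod n_{v_k})\). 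Because \(n_{v_k}\geq m\), the values \(\{(i+q)\bmod n_{v_k}:0\leq q<m\}\) are pairwise distinct for every fixed~\(i\), so no two machines process the same job simultaneously; and for every fixed~\(q\), they form a permutation of \(\{0,\ldots,n_{v_k}-1\}\), so no machine processes two jobs simultaneously. Since arrival and departure times at each~\(v_k\) coincide for all machines, routes following~\(\HC\) are compatible with this schedule, and conditions~\eqref{def:feas2}--\eqref{def:feas4} of \cref{def:ros} are satisfied.

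The resulting schedule has length \(c(\HC)+\sum_k n_{v_k}=c(\HC)+n\), matching the lower bound of \cref{obs:lobo} and hence optimal. The total running time is \(O(2^g g^2)\) for computing~\(\HC\) plus \(O(mn)\) for writing out the \(mn\)~start times and the \(m\)~routes, totalling \(O(2^g g^2+mn)\). No step is conceptually hard; the key observation is that the assumption \(n_v\geq m\) removes all bottlenecks, so the local open shop at each vertex admits a makespan-\(n_v\) schedule via the cyclic-shift Latin square and no machine ever idles, which is precisely what fails once some vertex contains fewer than \(m\)~jobs.
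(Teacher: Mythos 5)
Your proof is correct and follows essentially the same route as the paper: compute a minimum-cost Hamiltonian cycle via Held--Karp in \(O(2^g g^2)\) time, let all machines traverse it with synchronized stays of length \(n_{v_k}\), and schedule the jobs within each vertex by a cyclic-shift (Latin-square) rule, which works precisely because \(n_{v_k}\geq m\); the paper uses the shift \((i-q)\bmod n_k\) instead of your \((i+q)\bmod n_{v_k}\), an inessential difference.
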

\begin{proof}
  First, compute a minimum-cost Hamiltonian cycle~\(\HC=(v_1,v_2,\dots,v_{\nverts},v_{\nverts+1})\)
  in the input network~\(\Gr=(V,E,c,v^*)\) in \(\bigO(2^{\nverts}\cdot\nverts^2)\)~time
  using the algorithm of \citet{Bell62}, \citet{HelK62},
  where \(v_1=v_{\nverts+1}=v^*\)~is the depot.

  Denote by \(n_k\)~the number of jobs in~\(v_k\).
  Each machine will follow the same route~$R$
  of $\nverts+1$~stays $(a_1,v_1,b_1),\dots,\allowbreak (a_{\nverts+1},v_{\nverts+1},b_{\nverts+1})$,  where \(a_1:=0\), \(b_{\nverts+1}:=a_{\nverts+1}\),
  \begin{align*}
    b_{k}&:=a_{k}+n_k&\text{ and }&&
    a_{k+1} &:= b_{k}+c(v_k,v_{k+1}) &\text{for $k\in\{1,\dots,\nverts\}$}.
  \end{align*}
  That is, all machines will stay in vertex~\(v_k\) for the same \(n_k\)~time units.
  Now consider the schedule~\(S\) that
  schedules each job~\(J_i\) in a vertex~\(v_k\)
  on each machine~\(M_q\) at time
  \[
    S(J_i,M_q)=a_k+(i-q)\bmod n_k.
  \]
  Since \(n_k\geq m\), it is easy to verify
  that \(S(J_i,M_q)\ne S(J_j,M_r)\)
  for \(i=j\) and \(q\ne r\)
  and for \(q=r\) and \(i\ne j\) if \(J_i\) and \(J_j\) are in the same vertex~\(v_k\)
  (we did this for \eqref{eq:shift} in the proof of \cref{thm:artem}).
  For a job \(J_i\) in a vertex~\(v_k\)
  and a job \(J_j\) in a vertex~\(v_\ell\)
  with \(k< \ell\), \(S(J_i,M_q)\ne S(J_j,M_q)\)
  easily follows 
  from \(S(J_i,M_q)< a_k+n_k\leq a_\ell\leq S(J_j,M_q)\).

  It is obvious that \(S\)~is compatible to route~\(R\).
  The length of~\(R\) is \(c(H)+n\), which is optimal by \cref{obs:lobo}.
  \qed
\end{proof}

\noindent
For the general \ROSUPT{} problem, we prove the following theorem,
which is our main algorithmic result.

\begin{theorem}\label{thm:ros-fpt}
  \ROSUPT{} is solvable in \(2^{\bigO(\nverts\nmach^2\log\nverts\nmach)}+O(mn\log n)\text{ time.}\)
\end{theorem}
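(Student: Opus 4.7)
The plan is to enumerate combinatorial skeletons of potential optimal schedules---in essence, the vertex sequence traversed by each machine together with a discretized account of its stay durations---whose number is bounded purely in terms of $\nverts$ and $\nmach$, and then to convert each feasible skeleton into an explicit schedule in $O(\nmach\njobs\log\njobs)$ time at the end.

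First, I would compute a minimum-cost Hamiltonian cycle $\HC$ for the input network in $O(2^\nverts\nverts^2)$ time using the Bellman--Held--Karp algorithm. By \cref{obs:lobo}, this narrows the candidate optimal makespans to the $\nmach$ values in $\{c(\HC)+\njobs,\dots,c(\HC)+\njobs+\nmach-1\}$. For each such candidate $L$, I would enumerate \emph{pre-schedules}: for each machine $M_q$, pick a sequence of at most $s:=\pvstays$ stays---justified by \cref{lem:shortcyc}(i)---each labelled by a vertex of $V$ and by an integer \emph{displacement} encoding how much slack is inserted before that stay begins. Both \cref{lem:shortcyc}(ii) and the upper bound $L\le c(\HC)+\njobs+\nmach-1$ limit the total slack along any machine's route to $\bigO(\nverts+\nmach)$, so the number of admissible (vertex, displacement) labels per stay can be bounded by $\nverts\nmach$. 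This yields $(\nverts\nmach)^s$ skeletons per machine and $\pvfinstances=2^{\bigO(\nverts\nmach^2\log\nverts\nmach)}$ skeletons in total.

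Second, for each skeleton the arrival and departure times of every stay are determined, since travel times between consecutive vertices are fixed and the displacements specify all remaining slack. The remaining task is to check, vertex by vertex, whether the jobs located at $v$ can be distributed among the unit-length time slots that the $\nmach$ machines spend at $v$ so that conditions~(i) and~(ii) of \cref{def:ros} are satisfied. Because every processing time equals one, this is a bipartite matching/edge-colouring problem on jobs versus time-slots, solvable greedily by the same cyclic-shift device used in the proofs of \cref{thm:artem} and \cref{thm:trivsched}; the bound $\sum_{v_k^q=v}(b_k^q-a_k^q)\le n_v+\nmach-1$ from \cref{lem:vislength} is precisely the feasibility precondition for such a matching. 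Once a feasible skeleton attaining the smallest candidate $L$ is identified, writing out $S(J_i,M_q)$ for all $\nmach\njobs$ pairs costs $O(\nmach\njobs)$ after bucketing the jobs by location, contributing the additive $O(\nmach\njobs\log\njobs)$ term (the $\log\njobs$ factor coming from initial sorting).

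The main obstacle will be pinning down the displacement encoding sharply enough so that the number of distinct skeletons stays at $\pvfinstances$ and does not grow with $\njobs$, and arguing that at least one canonical skeleton realises an optimal schedule. Concretely, I would show that any optimal solution can be normalised by pushing all idle time into at most $s$ integer chunks of size $\bigO(\nverts+\nmach)$ each without increasing the makespan, after which enumeration is straightforward. The per-vertex matching verification for a fixed skeleton is, by comparison, a localised routine check that essentially reuses the constructions of \cref{thm:artem} and \cref{thm:trivsched}.
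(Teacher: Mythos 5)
The central gap is your claim that ``for each skeleton the arrival and departure times of every stay are determined.'' They are not: the length of a stay equals the idle time plus the number of jobs the machine processes during that particular visit, and when a machine visits a vertex more than once---which can be unavoidable in optimal schedules, cf.\ \cref{ex:badhamilton}---the split of the $n_v$ jobs among the visits is not part of your skeleton and can take up to $n_v$ values, a quantity depending on $\njobs$ rather than on $\nverts$ and $\nmach$. So a per-stay alphabet of size $\bigO(\nverts\nmach)$ cannot pin down the times, and enumerating per-visit job counts instead would give $\njobs^{f(\nverts,\nmach)}$ skeletons, destroying the claimed running time. Your normalisation idea (pushing idle time into few small integer chunks) does not help here, because the unbounded quantity is processing time per visit, not idle time. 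The paper circumvents exactly this by fixing in the enumeration only the lengths of stays in \emph{critical} vertices (at most $2\nmach-1$ by \cref{lem:vislength}) and the displacements between consecutive critical stays (capped at $2\nmach$), and then computing all remaining arrival and departure times---in particular the lengths of stays in non-critical vertices, constrained only to total at least $n_v$ per machine---as a feasible solution of an integer linear program with $\bigO(\nmach(\nmach+\nverts))$ variables via Lenstra's theorem (\cref{lem:complroutes}). Your proposal has no substitute for this step.

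Two further problems follow from the same source. Once the skeleton no longer determines the routes uniquely, you need the transfer argument of \cref{lem:brf}: if \emph{some} routes complying with a pre-schedule admit a compatible feasible schedule, then the \emph{particular} complying routes the algorithm happens to construct do as well; this is precisely what the paper's displacement component (fixing relative offsets of critical stays when they are below $2\nmach$, and forcing non-intersection otherwise) is designed to guarantee, and it is absent from your outline. Moreover, the per-vertex feasibility check is not a plain matching solved by the cyclic-shift device: in critical vertices the stays of different machines are short and staggered, the bound \eqref{slen} is a \emph{necessary} condition satisfied by optimal schedules rather than a sufficient condition for an assignment to exist, and the paper therefore brute-forces the critical schedule over $(2\nmach-1)^{\nverts\nmach^2}$ possibilities (\cref{lem:enumscheds}). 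The edge-colouring construction is used only for non-critical vertices, where each machine is guaranteed at least $n_v$ available slots, and it is this colouring that accounts for the additive $O(\nmach\njobs\log\njobs)$ term in the running time.
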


\noindent
In view of \cref{thm:trivsched} and \cref{ex:badhamilton},
the main challenge for our algorithm
are ``bottleneck vertices'' with few jobs,
which may force machines to idle.
\begin{definition}[critical vertices, \boldmath\(n_v\)]\label[definition]{def:crit}
  For a vertex~$v$ in the network of a \ROSUPT{} instance, let \(n_v\)~denote the number of jobs in~\(v\).   A vertex~\(v\) is \emph{critical} if $\jobsin{v}<m$.
\end{definition}

\noindent
To handle critical vertices,
we exploit that,
by \cref{lem:vislength},
the \(m\)~routes compatible to an optimal schedule
together have at most \(\totstays\)~stays
and stay in critical vertices last at most \(2m-1\)~time.
In time that depends only on~\(g\) and~\(m\), 
we can thus try all possibly optimal chronological sequences
of stays of machines in vertices,
lengths of stays in critical vertices,
and time differences between stays in critical vertices
if they intersect.
Thus, we can essentially try all possibilities of fixing
everything in the routes
except for the exact arrival and departure times of stays.
For each such possibility,
we will try to compute the arrival and departure times
using integer linear programming,
a schedule in critical vertices using brute force,
and a schedule in uncritical vertices using edge colorings of bipartite graphs.

In the following, we first formalize
these partially fixed schedules and routes
and then give a description of our algorithm in pseudo-code.

\begin{definition}[critical schedule]
  A \emph{critical schedule} is a
  function~$\sched\colon \Js{}\times\Ms{}\to\mathbb N\cup\{\bot\}$
  determining the \emph{start time}~$\sched(J_i, M_q)\ne\bot$
  of each job~$J_i$ in a critical vertex on each machine~$M_q$
  and having \(S(J_i,M_q)=\bot\) for all jobs~\(J_i\) in non-critical vertices.

  A critical schedule has to satisfy \cref{def:ros}\eqref{def:feas2}--\eqref{def:feas4}
  for all jobs~\(J_i\) and machines~\(M_q\) with \(S(J_i,M_q)\ne\bot\).
\end{definition}

\noindent
We now formally define \emph{pre-schedules},
which fix routes up to the exact arrival and departure times of stays.
The definition is illustrated in \cref{fig:presc}:

\begin{figure*}
  \centering
  \includegraphics{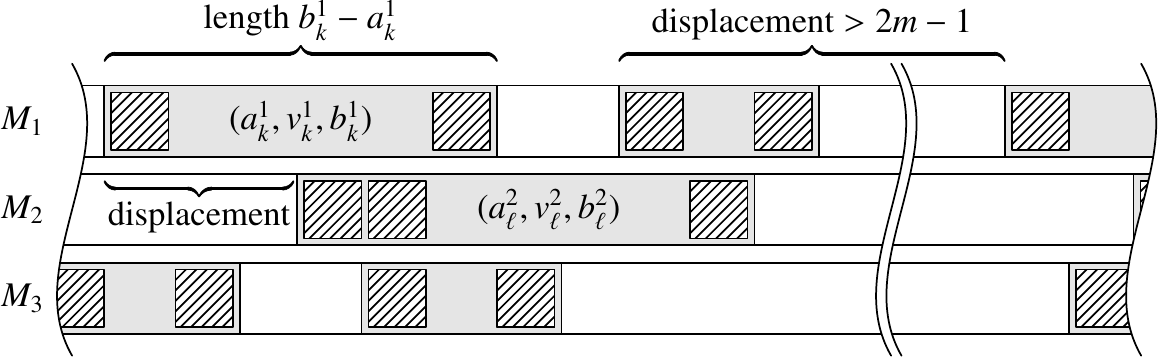}
  \caption{Shown is a part of a schedule for three
    machines~\(M_1,M_2\), and~\(M_3\).
    The horizontal axis is time.
    Uniformly gray boxes are stays in critical vertices that comply with some pre-schedule (the pre-schedule is not shown).
    Hatched squares correspond to jobs being processed.
    Illustrated are the lengths of stays and displacements between stays in critical vertices that are consecutive in the pre-stay sequence
    (stays in non-critical vertices are not shown).
    Herein, this displacement is either smaller
    than~\(2\nmach\),
    in which case the time difference between the stays
    is fixed for \emph{any} complying set of routes,
    or at least \(2\nmach\), in which case the stays cannot
    intersect in time for \emph{any} set of complying routes
    since stays in critical vertices have length at most
    \(2\nmach-1\) by \cref{def:presc}\eqref{la1}.}
  \label{fig:presc}
\end{figure*}
\begin{definition}[pre-schedule]\label[definition]{def:presc}
  A \emph{pre-schedule} is a triple~\((T,A,D)\).  Herein,
  \begin{itemize}[\(T\)]
  \item[\(T\)] is a \emph{pre-stay sequence}~\((T_k)_{k=1}^s\) with \(s\leq\totstays\)
    and will fix a chronological order of all machine stays
    (by non-decreasing arrival times).
    The \emph{\(k\)-th pre-stay}~\(T_k=(q_k,w_k)\)
    will require machine~\(M_{q_k}\) to stay in vertex~\(w_k\).
  \end{itemize}
  For the definition of the components~\(A\) and~\(D\) of a pre-schedule, let
  \(\mathcal K\subseteq\{1,\dots,s\}\) be the indices~\(k\) of pre-stays~\((q_k,w_k)\)
  such that \(w_k\)~is critical.  Then,
  \begin{itemize}[\(A\ \)]
  \item[\(A\colon\mathcal K\to\{0,\dots,2m-1\}\)] is
    called \emph{length assignment}
    and will fix the length of the stay corresponding
    to the \(k\)-th pre-stay to be~\(A(k)\), and
  \item[\(D\colon\mathcal K\to\{0,\dots,2m\}\)] is
    called \emph{displacement} and
    will fix the time difference
    between the stays corresponding to the \(k\)-th pre-stay
    and the previous pre-stay in a critical vertex
    to be \(D(k)\)  if \(D(k)\leq 2m-1\)
    or to be at least \(D(k)\) if \(D(k)=2m\) (which means
    that it will prevent the two stays from intersecting).
  \end{itemize}
  We now formalize routes that \emph{comply} with a pre-schedule.
  To this end, we denote by
  \begin{itemize}[\(\sigma_{T,k}\)]
  \item[\(\sigma_{T,k}\)]  the number %
    such that \((q_k,w_k)\) for some vertex~\(w_k\) is
    the \(\sigma_{T,k}\)-th pre-stay
    of machine~\(M_{q_k}\) in the pre-stay sequence~\(T\).
    We omit the subscript~\(T\) if the pre-stay sequence is clear from context.
  \end{itemize}
  Routes~\((R^{q})_{M_q\in\Ms}\), where \(R^q=((a_k^q,v_k^q,b_k^q))_{k=1}^{s_q}\), \emph{comply} with the \ps{} sequence~\(T\) if and only if
\begin{enumerate}[(i)]
\item\label{presc1} \(v^q_{\sigma_k}=w_k\) for all \(k\in\{1,\dots,s\}\), that is, \(M_q\) makes its \(\sigma_k\)-th stay in~\(w_k\), and
\item\label{presc2} for \ps{}s~\(({q_k},w_k)\) and~\(({q_{\ell}},w_{\ell})\) with \(k<\ell\),
  one has~\(a_{\sigma_k}^{q_k}\leq a_{\sigma_{\ell}}^{q_\ell}\), that is,
  the stays in all routes~\((R^{q})_{M_q\in\Ms}\) are chronologically ordered according to~\(T\).
\end{enumerate}
  Routes~\((R^q)_{M_q\in\Ms}\)  \emph{comply} with a \las~\(A\) if,
\begin{enumerate}[(i)]
\setcounter{enumi}{2}
\item\label{la1} each \ps{}~\((q_k,w_k)\) in~\(T\) with~\(k\in\mathcal K\)
has length \(b_{\sigma_k}^{q_k}-a_{\sigma_k}^{q_k}=A(k)\).
\end{enumerate}
Routes~\((R^q)_{M_q\in\Ms}\)  \emph{comply} with a \dis~\(D\) if
\begin{enumerate}[(i)]
\setcounter{enumi}{3}
\item\label{dis1} for two \ps{}s~\(({q_k},v_k)\) and~\(({q_\ell},v_\ell)\) such that \(k<\ell\), \(\{k,\ell\}\subseteq\mathcal K\), and \(i\notin\mathcal K\) for each \(i\in\{k+1,\dots,\ell-1\}\), one has
  \begin{align*}
    a_{\sigma_\ell}^{q_\ell}&\geq a_{\sigma_k}^{q_k}+2m&&\text{if~\(D(j)=2\nmach\) and}\\
    a_{\sigma_\ell}^{q_\ell}&=a_{\sigma_k}^{q_k}+D(\ell)&&\text{if~\(D(j)<2\nmach\).}
  \end{align*}
\end{enumerate}
Routes \emph{comply} with \((T,A,D)\) if they comply with each of~\(T\), \(A\), and~\(D\).
\end{definition}

\noindent
\cref{lem:vislength} implies that there is a pre-schedule
complying with some routes compatible
to an optimal schedule.
We thus enumerate all possible pre-schedules and, for each,
try to find routes complying with
the pre-schedule and compatible to an optimal schedule.
This leads to the following algorithm.
\begin{algorithm}\small\leavevmode
  \label[algorithm]{alg:outline}
  \begin{compactdesc}
  \item[\it Input:] A \ROSUPT{} instance~\(I=(\Gr,\Js,\Ms,\loc,P)\) on a network~\(\Gr=(V,E,c,v^*)\).
    
  \item[\it Output:] A minimum-makespan schedule for~\(I\).
  \end{compactdesc}
  \begin{compactenum}
  \item Preprocess \(G\)~to establish the triangle inequality.\hfill// \cref{lem:metric}\label{lin:metric}

  \item \(H\gets{}\)minimum-cost Hamiltonian
    cycle~\(H\) in~\(G\).\label{lin:hamcyc}

  \item \textbf{for} \(L=c(H)+n\) \textbf{to} \(c(H)+n+m-1\) \textbf{do}
    \hfill// Try to find schedule with makespan~\(L\) (Obs.~\ref{obs:lobo})\label{lin:chooseL}

  \item \label{lin:preschloop}\quad\textbf{foreach} pre-schedule
    \((T,A,D)\) \textbf{do}\hfill// \cref{numprescheds}

  \item \qquad\textbf{if} there are routes~\((R^q)_{M_q\in\Ms}\) that
    comply with~\((T,A,D)\), \hfill// \cref{lem:complroutes}\label{lin:consroute}

    \qquad\qquad that     have length at most~\(L\) each, and

    \qquad\qquad stay in each \emph{non-critical} vertex~\(v\in V\) at least \(n_v\)~time,
    \textbf{then}
    
  \item\quad\qquad\textbf{if} there is a critical
    schedule~\(S'\) compatible with~\((R^q)_{M_q\in\Ms}\), \textbf{then}
    \hfill// 
    \cref{lem:enumscheds}\label{lin:conspartsched}

  \item
    \qquad\qquad complete~\(S'\) into a feasible schedule \(S\) compatible with~\((R^q)_{M_q\in\Ms}\).
          \hfill// \cref{lem:brf}\label{lin:complsched}
          
  \item\qquad\qquad \textbf{return} $S'$.\label{lin:retsched}
\end{compactenum}
\end{algorithm}

\noindent
To prove the correctness and the running time of \cref{alg:outline},
we prove the lemmas named in its comments.
We already proved \cref{lem:metric,obs:lobo}
and we continue proving lemmas in the order appearing in the algorithm.
First,
we bound the number of pre-schedules
and thus
the number of repetitions of the loop in line~\ref{lin:preschloop}.

\begin{lemma}\label[lemma]{numprescheds}
  In line~\ref{lin:preschloop} of \cref{alg:outline},
  there are at most \(2^{O(m^2\log gm+mg\log gm)}\) pre-schedules,
  which can be enumerated in \(2^{O(m^2\log gm+mg\log gm)}\)~time.
\end{lemma}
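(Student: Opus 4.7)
The plan is to count the number of choices for each of the three components $T$, $A$, $D$ of a pre-schedule separately and multiply, using \cref{lem:vislength} to bound the length of the pre-stay sequence.

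First I would count the pre-stay sequences $T$. Since routes compatible with an optimal schedule can be assumed to obey \cref{lem:vislength}, each machine has at most $\pvstays$ stays, and the total number of stays over all $m$ machines is at most $s \leq \totstays$. Each pre-stay $T_k=(q_k,w_k)$ belongs to $\{1,\dots,m\}\times V$ and thus has at most $\nverts\nmach$ possible values, so the number of pre-stay sequences of any length $s'\in\{0,1,\dots,\totstays\}$ is bounded by $\sum_{s'=0}^{\totstays}(\nverts\nmach)^{s'} = O\bigl((\nverts\nmach)^{\totstays}\bigr)$. Taking the logarithm gives $O\bigl((m^2+mg)\log(\nverts\nmach)\bigr) = O(m^2\log gm + mg\log gm)$, so the number of possible pre-stay sequences is $2^{O(m^2\log gm + mg\log gm)}$.

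Next I would bound $A$ and $D$ given $T$. Since $T$ determines $\mathcal K\subseteq\{1,\dots,s\}$, the number of length assignments $A\colon\mathcal K\to\{0,\dots,2m-1\}$ is at most $(2m)^{|\mathcal K|}\leq(2m)^s$, and the number of displacements $D\colon\mathcal K\to\{0,\dots,2m\}$ is at most $(2m+1)^s$. The log of the product is $O(s\log m)=O((m^2+mg)\log m)$, which is dominated by the bound obtained for $T$. Multiplying all three bounds, the total number of pre-schedules is at most $2^{O(m^2\log gm + mg\log gm)}$.

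For the enumeration claim, I would iterate over $s\in\{0,\dots,\totstays\}$, then over all sequences in $(\{1,\dots,m\}\times V)^s$ (producing $T$), determine $\mathcal K$ from $T$ in linear time, and finally iterate over all length assignments in $\{0,\dots,2m-1\}^{|\mathcal K|}$ and all displacements in $\{0,\dots,2m\}^{|\mathcal K|}$. Each triple $(T,A,D)$ is produced with $\text{poly}(m,g)$ overhead, which is absorbed into the exponent, so the total enumeration time matches the count. The only subtlety is verifying that the sub-leading factors in the log — in particular the contribution from $A$ and $D$ — really are absorbed by the dominant $(m^2+mg)\log(\nverts\nmach)$ term; this is immediate since $\log m\leq\log(\nverts\nmach)$, so no obstacle remains.
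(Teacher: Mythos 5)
Your proposal is correct and takes essentially the same route as the paper: bound the pre-stay sequences by \((\nverts\nmach)^{\totstays}\), the length assignments by \((2\nmach)^{\totstays}\), and the displacements by \((2\nmach+1)^{\totstays}\), multiply, observe the logarithm is \(O(\nmach^2\log \nverts\nmach+\nmach\nverts\log \nverts\nmach)\), and enumerate by straightforward nested iteration/recursion with negligible overhead. The only cosmetic difference is that you invoke \cref{lem:vislength} to justify \(s\leq\totstays\), which is unnecessary here since that bound is already built into \cref{def:presc} (the lemma is needed later only to argue that some optimal schedule is covered by the enumeration).
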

\begin{proof}
  There are at most \((mg)^\totstays\) pre-stay sequences:
  a pre-stay sequence consists of at most \(\totstays\) pre-stays,
  each of which is a pair of one of \(m\)~machines and one of \(g\)~vertices.
  For each pre-stay sequence, there are at most \((2m)^\totstays\) length assignments
and at most \((2m+1)^\totstays\) displacements.
Thus, the number  of pre-schedules is at most
\((4gm^3+2gm^2)^\totstays=2^{(\totstays)\log (4gm^3+2gm^2)}\).
They can obviously be enumerated in the stated running time
using a recursive algorithm.
  \qed
\end{proof}

\noindent
We realize the check in line~\ref{lin:consroute}
by testing
the feasibility of an integer linear program
whose number of variables, number of constraints, and
absolute value of coefficients is bounded by
\(\bigO(\totstays)\).
By Lenstra's theorem, this works in \(2^{O(m^2\log gm+mg\log gm)}\)~time:
\begin{proposition}[\citet{Len83}; see also \citet{Kan87}]\label[proposition]{thm:lenstra}
  A feasible solution to an integer linear program of size~\(n\) with \(p\)~variables
  is computable using \(p^{\bigO(p)}\cdot n\)~arithmetical operations,
  if such a feasible solution exists.
\end{proposition}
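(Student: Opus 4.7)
The plan is to prove the bound by recursion on the dimension~$p$, reducing the $p$-dimensional feasibility problem $\{x \in \mathbb{Z}^p : Ax \le b\}$ to a bounded number of $(p-1)$-dimensional subproblems. First I would compute the LP relaxation $P = \{x \in \mathbb{R}^p : Ax \le b\}$ in time polynomial in~$n$; if $P = \emptyset$, declare infeasibility. Otherwise the aim is either to locate an integer point of~$P$ or to certify that there is none, by recursing on carefully chosen lower-dimensional slices.

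The central geometric tool will be the \emph{flatness theorem}: any convex body $K \subseteq \mathbb{R}^p$ with $K \cap \mathbb{Z}^p = \emptyset$ is flat in some integer direction, meaning there is a primitive $c \in \mathbb{Z}^p$ whose lattice width $\max_{x \in K} c^\top x - \min_{x \in K} c^\top x$ is bounded by a quantity $\omega(p)$ depending only on~$p$. Contrapositively, if the width of~$P$ in every integer direction exceeds $\omega(p)$, then $P \cap \mathbb{Z}^p \ne \emptyset$. To make this constructive, I would apply LLL basis reduction to a lattice derived from a John-type outer ellipsoidal approximation of~$P$; the shortest vector of the reduced basis then supplies both a candidate direction~$c$ and a certified upper bound $w \le \omega(p)$ on the width of~$P$ in direction~$c$. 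LLL itself runs in time polynomial in the input size~$n$.

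Given such a~$c$ and~$w$, the set $\{c^\top x : x \in P \cap \mathbb{Z}^p\}$ contains at most $\lfloor w \rfloor + 1 \le \omega(p) + 1$ values, so I would recurse on each $(p-1)$-dimensional subproblem $\{x : Ax \le b,\; c^\top x = v\}$ for integer $v$ in that range. The recursion has depth~$p$ and branching factor $\omega(p)+1$, so with Kannan's sharpening of the flatness bound to $\omega(p) = p^{O(1)}$ the number of leaves is $(\omega(p)+1)^p = p^{O(p)}$; each node performs only polynomial work in~$n$, yielding the claimed $p^{O(p)} \cdot n$ operations.

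The hard part will be the flatness theorem with polynomial $\omega(p)$ together with its constructive counterpart: the bound itself rests on Minkowski-type geometry-of-numbers arguments, and efficient extraction of the flat direction depends on the quality guarantees of LLL reduction and Kannan's refinements. I would invoke both as black boxes from the cited literature rather than rederive them, since either alone would dwarf the rest of the argument.
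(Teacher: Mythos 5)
The paper never proves this proposition: it is imported verbatim as a black box from \citet{Len83} and \citet{Kan87}, so there is no in-paper argument to measure your sketch against --- the relevant comparison is to those cited works, whose overall strategy (recursion on the dimension, flatness of lattice-point-free convex bodies, basis reduction to find a flat direction) you have indeed reproduced at the outline level.

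There is, however, a genuine gap in how you assemble the pieces, and it sits exactly at the point that separates Lenstra's bound from Kannan's. You claim that LLL reduction applied to a lattice derived from an ellipsoidal approximation of \(P\) yields a direction \(c\) together with a \emph{certified} width bound \(w\le\omega(p)=p^{O(1)}\). LLL cannot certify this: its shortest reduced vector is only guaranteed to be within a factor \(2^{\Theta(p)}\) of a shortest lattice vector, so the width bound it certifies in the infeasible case is \(2^{\Theta(p)}\cdot p^{O(1)}\), not polynomial in \(p\). With that branching factor the recursion of depth \(p\) has \(2^{\Theta(p^2)}\) leaves, which is Lenstra's original bound and does not give the stated \(p^{O(p)}\cdot n\) operation count --- and the distinction is not cosmetic here, since the paper's running-time analysis in \cref{lem:complroutes} (with \(p=O(m^2+gm)\) variables) relies on \(p^{O(p)}\) to obtain \(2^{O(m^2\log gm+mg\log gm)}\); a \(2^{O(p^2)}\)-type bound would yield an asymptotically worse exponent. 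The polynomial flatness theorem you invoke is true, but making it algorithmic within the \(p^{O(p)}\) budget is precisely Kannan's contribution: he replaces LLL by exact shortest-vector computation (Korkine--Zolotarev-type reduced bases obtained by enumeration in \(p^{O(p)}\) time per node) and a refined recursion, which keeps the number of lower-dimensional subproblems polynomial in \(p\) per level. So either state and use Lenstra's weaker bound honestly, or credit the \(p^{O(p)}\) count to Kannan's basis-reduction machinery rather than to LLL; as written, the step ``LLL certifies \(w\le p^{O(1)}\)'' is false and the claimed operation count does not follow from your argument.
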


\begin{lemma}\label[lemma]{lem:complroutes}
  The  routes in line~\ref{lin:consroute} of \cref{alg:outline}
  are computable in \(2^{O(m^2\log gm+mg\log gm)}\)
  time, if they exist.
\end{lemma}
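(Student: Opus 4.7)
The plan is to encode the existence of complying routes of length at most~\(L\) that additionally spend at least \(n_v\)~time in every non-critical vertex~\(v\) as the feasibility of an integer linear program (ILP) and then apply \cref{thm:lenstra}. Since the \psc~\((T,A,D)\) already fixes, for every stay, which machine is where, in what chronological position, and---when the vertex is critical---for how long, the only remaining unknowns are the arrival time~\(a_k\) and departure time~\(b_k\) of each of the \(s\le\totstays\) pre-stays. Introducing one integer variable for each gives \(p=2s=\bigO(\totstays)=\bigO(m^2+mg)\) variables in total.

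The compliance requirements of \cref{def:presc}, together with the length and non-critical staying conditions of line~\ref{lin:consroute}, are captured by the following linear constraints: \(0\le a_k\le b_k\) for each~\(k\); \(a_k=0\) whenever \(T_k\) is the first pre-stay of its machine (forced to lie at the depot by~\(T\)); the travel equation \(a_\ell=b_k+c(w_k,w_\ell)\) whenever \(T_k\) and~\(T_\ell\) are consecutive pre-stays of the same machine; the length equality \(b_k-a_k=A(k)\) for \(k\in\mathcal K\); the two \dis{} constraints of \cref{def:presc}\eqref{dis1}, which are both linear because \(D(\ell)\) itself specifies whether equality or the \(\ge 2m\) inequality applies; the chronological order \(a_k\le a_\ell\) for \(k<\ell\); the makespan bound \(b_{k^*}\le L\) for the last pre-stay~\(k^*\) of each machine; and the per-machine reservation \(\sum_{k:q_k=q,\,w_k=v}(b_k-a_k)\ge n_v\) for every machine~\(M_q\) and every non-critical vertex~\(v\). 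All variable coefficients are~\(\pm 1\), the number of constraints is~\(\bigO(\totstays)\), and after a straightforward change of variables to the idle slack of each pre-stay above its earliest-possible arrival time (which is itself fully determined by~\(T\), \(A\), and the travel times along~\(T\)) every right-hand side can be kept of absolute value~\(\bigO(\totstays)\).

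Applying \cref{thm:lenstra} therefore returns a feasible assignment, from which the complying routes are read off directly, in \(p^{\bigO(p)}=2^{\bigO((m^2+mg)\log mg)}=2^{\bigO(m^2\log gm+mg\log gm)}\)~time; if the ILP is infeasible, no complying routes exist. The only non-obvious obstacle in setting up the ILP is the displacement condition, whose superficially disjunctive form of either equality or \(\ge 2m\) could have required case analysis; however, \(D(\ell)\) already selects the active case, so a single linear constraint per consecutive critical pre-stay pair suffices.
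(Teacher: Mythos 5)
Your proposal matches the paper's proof: it sets up essentially the same integer linear program over the arrival and departure times of the at most \(\totstays\) pre-stays (travel-time equations, chronological ordering, length-assignment and displacement constraints, the makespan bound, and the \(n_v\)-reservation constraints) and applies Lenstra's theorem (\cref{thm:lenstra}) to obtain the claimed running time. Your only deviation is the side remark about a change of variables keeping all right-hand sides of absolute value \(\bigO(m^2+gm)\), which is neither needed (the cited form of Lenstra's theorem is polynomial in the encoding size of the program, so the constants \(L\), \(c\), and \(n_v\) only contribute logarithmic factors) nor clearly achievable as stated, since for instance the reservation constants \(n_v\) can be as large as \(n\); the paper simply applies the theorem to the program as is.
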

\begin{proof}
  Let \(T:=(({q_k},w_k))_{i=1}^s\) be the pre-stay sequence
  in the pre-schedule~\((T,A,D)\) enumerated in line~\ref{lin:preschloop}.
  By \(s_q\), denote the number of pre-stays of a machine~\(M_q\) in~\(T\).
  We compute the routes \((R^q)_{M_q\in\Ms}\),
  where \(R^q:=((a_k^q,v_k^q,b_k^q))_{k=1}^{s_q}\), as follows.
For each pre-stay \(({q_k},w_k)\) on~\(T\),
we let~\(v_{\sigma_k}^{q_k}:=w_k\).  If \(v_1^q\ne v^*\) or \(v^*\ne w_{s_q}^q\)
for some machine~\(M_q\), where \(v^*\)~is the depot, then there are no routes
complying with~\((T,A,D)\) and  we return ``no'' accordingly.

Otherwise, by \cref{def:presc}, the \(a_k^q\) and \(b_k^q\)
for all machines~\(M_q\) and \(1\leq k\leq s_q\) together
are at most \(2\nmach\cdot(\pvstays)\)~variables.
They can be determined by a feasible solution to an integer linear program.
This, together with \cref{thm:lenstra} directly yields the running time stated in \cref{lem:complroutes}. 
The linear program consists of the following constraints.
  We want each route to have length at most~\(L\), that is,
  \begin{align*}
    b_{s_q}^q&\leq L&&\text{for each \(M_q\in\Ms\)}.\\
    \intertext{The difference between departure and arrival times are the travel times, that is,}
    b_{k}^q+c(v_{k}^q,v_{k+1}^q)&= a_{k+1}^q&&\text{for each \(M_q\in\Ms\) and \(1\leq k\leq s_q-1\).}\\
\intertext{Stays should have non-negative length, that is,}
    a_{k}^q&\leq b_{k}^q&&\text{for each \(M_q\in\Ms\) and \(1\leq k\leq s_q\).}\\
    \intertext{Each machine should stay in~\(v\in V\) for at least \(n_v\)~time, that is}
    \sum_{\substack{1\leq k\leq s_q\\w_{k}^{q}=v}}(b_k^q-a_k^q)&\geq n_v&&\text{for each~\(M_q\in\Ms\) and~\(v\in V\).}
\intertext{Stays must be ordered according to the pre-stay sequence~\(T\), that is}
    a_{\sigma_k}^{q_k}&\leq a_{\sigma_\ell}^{q_\ell}&&\text{for pre-stays \(({q_k},v_k)\) and \(({q_\ell},v_\ell)\) with \(k\leq \ell\).}
    \intertext{Stays should adhere to the length assignment~\(A\), that is}
    b_{\sigma_k}^{q_k}-a_{\sigma_k}^{q_k}&=A(k)&&\text{for each pre-stay \(({q_k},v_k)\) such that \(v_k\)~is critical.}
    \intertext{Finally, routes have to comply with the displacement~\(D\).  To formulate the constraint, let \(\mathcal K\) be the indices of \ps{}s of~\(T\) in critical vertices.  For any two pre-stays \(({q_k},w_k)\) and \(({q_\ell},w_\ell)\) with \(k<\ell\), \(\{k,\ell\}\subseteq\mathcal K\), and \(i\notin\mathcal K\) for each \(i\in\{k+1,\dots,\ell-1\}\), we want}
a_{\sigma_\ell}^{q_\ell}&\geq a_{\sigma_k}^{q_k}+D(\ell)&&\text{if \(D(\ell)=2\nmach\), and}\\
a_{\sigma_\ell}^{q_\ell}&=a_{\sigma_k}^{q_k}+D(\ell)&&\text{if \(D(\ell)<2\nmach\).}\hspace{5cm}\qed
  \end{align*}
\end{proof}

\noindent
Next, we show how to realize the check in \ref{lin:conspartsched} of \cref{alg:outline}.

\begin{lemma}\label[lemma]{lem:enumscheds}
  The critical schedule~\(S'\) in line~\ref{lin:conspartsched}
  is computable
  in \(2^{\bigO(\nverts\nmach^2\log\nmach)}\)~time,
  if such a schedule exists.
\end{lemma}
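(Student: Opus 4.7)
My plan is to compute~\(S'\) by brute-force enumeration, one critical vertex at a time. First, I observe that, once the routes~\((R^q)_{M_q\in\Ms}\) from line~\ref{lin:consroute} are fixed, the feasibility conditions \cref{def:ros}\eqref{def:feas2}--\eqref{def:feas4} for a critical schedule decompose over critical vertices. Concretely, the route constraint \(a_{k+1}^q=b_k^q+c(v_k^q,v_{k+1}^q)\) together with \(c\geq 1\) from \cref{obs:delverts} guarantees that the stays of any single machine in two distinct vertices are separated in time by at least one unit. Therefore, condition~\eqref{def:feas2} (machine constraint) can only be violated by two jobs located in the same vertex; condition~\eqref{def:feas3} (job constraint) involves a single job, which resides in a single vertex; and condition~\eqref{def:feas4} is automatically enforced by restricting every start time to an integer lying inside some stay of the relevant machine in the job's vertex. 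Hence, I treat each critical vertex independently.

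For a fixed critical vertex~\(v\), \cref{lem:shortcyc} implies that each machine~\(M_q\) stays in~\(v\) for a total length of at most \(\jobsin{v}+\nmach-1\leq 2\nmach-1\)~time units, so every start time \(S'(J_i,M_q)\) to be chosen in~\(v\) lies in a set of at most \(2\nmach-1\) integers that is completely determined by the routes. Enumerating all combinations of start times for the \(\jobsin{v}\cdot\nmach<\nmach^2\) pairs \((J_i,M_q)\) with \(\loc(J_i)=v\) yields at most \((2\nmach-1)^{\jobsin{v}\cdot\nmach}\leq(2\nmach)^{\nmach^2}=2^{\bigO(\nmach^2\log\nmach)}\) candidate sub-schedules per vertex. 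For each candidate, feasibility reduces to two simple integer-distinctness checks that can be verified in \(\poly(\nmach)\)~time: the \(\jobsin{v}\) start times on each single machine must be pairwise distinct (which for unit processing times is equivalent to \eqref{def:feas2}), and for each job the \(\nmach\) start times on the various machines must be pairwise distinct (equivalent to \eqref{def:feas3}).

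Combining the per-vertex enumerations over the at most \(\nverts\) critical vertices yields an overall running time of \(\nverts\cdot 2^{\bigO(\nmach^2\log\nmach)}=2^{\bigO(\nverts\nmach^2\log\nmach)}\). If every critical vertex admits at least one feasible sub-assignment, I stitch them together into the required critical schedule~\(S'\); otherwise, no critical schedule complying with~\((R^q)_{M_q\in\Ms}\) exists and I report failure.

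The main point requiring care is the decomposition claim, since it is what lets me brute-force each vertex in isolation rather than solving one monolithic problem. I would spell this out explicitly from the route constraint and \(c\geq 1\), which together show that any machine's processing intervals in distinct vertices are separated by at least one time unit, so that conditions~\eqref{def:feas2} and~\eqref{def:feas3} reduce to genuinely local, vertex-by-vertex consistency checks and justify the claimed per-vertex brute force.
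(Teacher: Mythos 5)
Your overall strategy is the same as the paper's: fix the routes from line~\ref{lin:consroute}, observe that every start time of a job in a critical vertex must be an integer lying inside a stay of the corresponding machine in that job's vertex, and brute-force over the boundedly many candidate start times, checking pairwise distinctness. Your per-vertex decomposition is sound: distinct stays of one machine are time-disjoint because \(a_{k+1}^q=b_k^q+c(v_k^q,v_{k+1}^q)\ge b_k^q+1\), so once \cref{def:ros}\eqref{def:feas4} is enforced by the candidate sets, condition~\eqref{def:feas2} can only be violated by two jobs in the same vertex, and condition~\eqref{def:feas3} concerns a single job and hence a single vertex. This is in fact a refinement of the paper, which enumerates over all \(O(\nverts\nmach^2)\) pairs at once rather than vertex by vertex.

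The gap is your bound on the candidate sets. You invoke \cref{lem:shortcyc} to conclude that machine~\(M_q\) stays in a critical vertex~\(v\) for at most \(\jobsin{v}+\nmach-1\le 2\nmach-1\) time units in total, so that each pair \((J_i,M_q)\) has at most \(2\nmach-1\) candidate start times. But \cref{lem:shortcyc} is a statement about routes compatible with an \emph{optimal} schedule, whereas the routes \((R^q)_{M_q\in\Ms}\) in line~\ref{lin:consroute} are produced for an \emph{arbitrary} pre-schedule enumerated in line~\ref{lin:preschloop}: the length assignment only caps each \emph{individual} stay in a critical vertex at \(2\nmach-1\), and a machine may have many pre-stays in the same critical vertex, so its total stay time there---and hence your candidate set---can exceed \(2\nmach-1\). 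Two repairs are possible. The paper's: enumerate only the \emph{smallest} \(2\nmach-1\) time units that \(M_q\) spends in \(\loc(J_i)\); this keeps the enumeration small for every pre-schedule, and completeness is only needed in the iteration whose pre-schedule stems from routes compatible with an optimal schedule, where \cref{lem:shortcyc} does apply and shows that the restriction loses nothing. Alternatively, keep your candidate sets but count them correctly: at most \(2\nmach-1\) integers per stay (by \cref{def:presc}\eqref{la1}) times at most \(\totstays\) stays, i.e.\ \(O(\nmach^2(\nmach+\nverts))\) candidates per pair, which together with your per-vertex decomposition still yields \(\nverts\cdot 2^{O(\nmach^2\log\nverts\nmach)}\subseteq 2^{O(\nverts\nmach^2\log\nmach)}\) for \(\nmach\ge 2\) (the case \(\nmach=1\) being trivial), so the lemma's bound survives. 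As written, however, the claimed size-\((2\nmach-1)\) candidate sets do not follow from the cited lemma.
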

\begin{proof}
  In total, there are at most \(\nverts\cdot\nmach\)~jobs in critical vertices.
  Thus, we determine the starting time~\(S(J_i,M_q)\)
  for at most \(\nverts\cdot\nmach^2\)~pairs \((J_i,M_q)\in\Js\times\Ms\).
  By \cref{lem:vislength},
  each machine can process all of its jobs in a critical vertex staying there
  no longer than \(2\nmach-1\)~units of time.
  Thus, for each of at most \(\nverts\cdot\nmach^2\)~pairs \((J_i,M_q)\),
  we enumerate all possibilities of choosing \(S(J_i,M_q)\) among the
  smallest \(2\nmach-1\)~time units where~\(M_q\) stays in vertex~\(\loc(J_i)\).

  There are \((2\nmach-1)^{\nverts\cdot\nmach^2}\in 2^{\bigO(\nverts\nmach^2\log\nmach)}\)
  possibilities to do so.  The feasibility's of each variant
  can be checked in \(O(g\cdot m^2)\)~time and
  they can all be enumerated in \(2^{\bigO(\nverts\nmach^2\log\nmach)}\)~total time:
  since the routes~\((R^q)_{M_q\in\Ms}\) comply with~\((T,A,D)\),
  they
  have at most \(\totstays\)~stays in total,
  thus we can list the first \(2m-1\)~time units
  that a concrete machine stays in a concrete vertex
  in \((2m-1)+\totstays\)~time.
\qed\end{proof}

\noindent
In the following, we provide the last building block
for proving the correctness and the running time of \cref{alg:outline}:
\cref{lem:vislength} already shows that there is a pre-schedule
that complies with some routes that are compatible with
an optimal schedule.
Thus, it is sufficient to try, for each pre-schedule~\((T,A,D)\),
to search for schedules compatible with routes
complying with~\((T,A,D)\).
However, the algorithm only searches for schedules
compatible to \emph{one} collection of machine routes
complying with~\((T,A,D)\).
The following lemma shows that this is sufficient.

\begin{lemma}\label[lemma]{lem:brf}
  If a \ROSUPT{} instance on a network~\(\Gr=(V,E,c,v^*)\)
  allows for a feasible schedule
  compatible to routes
  complying with a pre-schedule~\((T,A,D)\), then
  \begin{enumerate}[(i)]
  \item\label{brf:psched} for any collection of routes~\((R^q)_{M_q\in\Ms}\) complying with~\((T,A,D)\),
    there is a critical schedule compatible with \((R^q)_{M_q\in\Ms}\), and
  \item\label{brf:completable} any such critical schedule
    can be extended into a feasible schedule compatible to~\((R^q)_{M_q\in\Ms}\)
    if each route~\(R^q\) stays in each vertex~\(v\)
    for at least~\(n_v\)~time.
  \end{enumerate}
  Moreover, line~\ref{lin:complsched} can be carried out in \(O(m^2+gm+mn\log n)\)~time.
\end{lemma}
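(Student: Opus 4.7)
My plan is to handle parts~\eqref{brf:psched} and~\eqref{brf:completable} in turn and then read off the running time.

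For part~\eqref{brf:psched}, I will argue that any two collections \((R^q)\) and \((\tilde R^q)\) complying with~\((T,A,D)\) have the same overlap pattern among their stays in critical vertices. By \cref{def:presc}\eqref{la1}, every critical stay has length \(A(k)\leq 2m-1\); by \cref{def:presc}\eqref{dis1}, two critical stays consecutive in~\(T\) restricted to~\(\mathcal K\) either have a time gap fixed exactly to some value \(D(\ell)<2m\), or a gap of at least~\(2m\).  In the latter case, since no critical stay has length~\(\geq 2m\), the two stays cannot overlap in \emph{either} collection; in the former case, the arrival-time offset between \((R^q)\) and \((\tilde R^q)\) propagates unchanged from one critical stay to the next.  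Given a critical schedule~\(\tilde S\) compatible with~\((\tilde R^q)\), I transfer it to~\((R^q)\) by translating each job's start time by the offset of its enclosing critical stay.  Conditions \cref{def:ros}\eqref{def:feas2}--\eqref{def:feas4} survive this translation because all relative timings, and hence all pairwise overlap structures, of critical stays are preserved.

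For part~\eqref{brf:completable}, extending~\(S'\) to a full feasible schedule reduces to scheduling each non-critical vertex~\(v\) independently, since jobs in different vertices never conflict and \(S'\) already covers all critical vertices.  At a fixed non-critical vertex~\(v\) with \(n_v\geq m\) jobs, each machine~\(M_q\) is present at a set~\(\pi_q\) of at least \(n_v\) unit time slots, and I must assign to each pair \((J_i,M_q)\) a slot in~\(\pi_q\) so that no slot is reused on the same machine or on the same job.  This is a list edge coloring of~\(K_{n_v,m}\) whose edge-lists have size \(\geq n_v = \Delta(K_{n_v,m})\), whose existence is guaranteed by Galvin's theorem and can be computed in polynomial time by its kernel-method proof.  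The main obstacle is precisely here: different machines' stays at~\(v\) need not share common times, so the direct cyclic-shift construction used in the proof of \cref{thm:trivsched} does not apply, and Galvin's theorem is the replacement tool.

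For the running time of line~\ref{lin:complsched}, \cref{lem:vislength} gives at most \(m(m+2g-3)=O(m^2+gm)\) total stays, which are collected and indexed in \(O(m^2+gm)\) time.  For each non-critical vertex~\(v\), the list edge coloring algorithm on \(K_{n_v,m}\) with the lists~\((\pi_q)\) runs in \(O(n_v m\log n_v)\) time; summing over vertices yields \(O(mn\log n)\) and the claimed overall bound.
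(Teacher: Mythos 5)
Your part~\eqref{brf:psched} is essentially the paper's own argument: you translate each job's start time by the offset \(a^{q}_{\sigma_p}-a^{q*}_{\sigma_p}\) of its enclosing critical stay, and you justify that distinctness of start times survives because, between any two critical stays that can overlap, every intermediate displacement is fixed to a value below \(2m\) (so the exact relative timing is the same in both route collections), while a displacement of \(2m\) forces disjointness in either collection since critical stays have length at most \(2m-1\) by \cref{def:presc}\eqref{la1}. That is exactly the case distinction in the paper's proof, just stated more compactly; it is sound.

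For part~\eqref{brf:completable} you take a genuinely different route. The paper builds, for each non-critical vertex~\(v\), a bipartite \emph{machine--time} graph (each machine joined to \(n_v\) chosen time slots of its stay in~\(v\)) and properly edge-colors it with \(n_v\) colors, the colors being the jobs; existence follows from König's edge-coloring theorem since the maximum degree is \(n_v\geq m\), and the \(O(mn_v\log n_v)\) bound comes from the cited bipartite edge-coloring algorithm. You instead take the conjugate view: a \emph{job--machine} graph \(K_{n_v,m}\) whose colors are time slots and whose edge lists are the machines' availability sets, and you invoke Galvin's list-edge-coloring theorem. The existence claim is correct (lists have size \(\geq n_v=\Delta\)), but this is a heavier hammer than needed, and your running-time claim is where the gap lies: you assert that the kernel-method construction runs in \(O(n_vm\log n_v)\) time without any argument or reference, and the straightforward implementation (one kernel computation per color, with up to \(mn_v\) distinct colors) does not give that bound. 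Since the \(O(m^2+gm+mn\log n)\) bound is part of the lemma, this needs repair. The fix is easy and essentially collapses your approach into the paper's: because every edge incident to \(M_q\) carries the \emph{same} list \(\pi_q\), you may restrict \(\pi_q\) to exactly \(n_v\) slots, and then a valid assignment must use each machine's slots bijectively, so the problem is an ordinary (list-free) proper edge coloring of the bipartite machine--time graph with \(n_v\) colors, for which König's theorem and the known \(O(|E|\log\Delta)\)-time bipartite edge-coloring algorithm give both existence and the claimed running time.
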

\begin{proof}
  \eqref{brf:psched}
  Let \(S^*\)~be a feasible schedule compatible to
  some routes~\((R^{q*})_{M_q\in\Ms}\)
  complying with~\((T,A,D)\).  Denote the pre-stay sequence~\(T=(({q_k},w_k))_{k=1}^s\).
  We show how to construct a critical schedule~\(S\)
  compatible with respect to the routes~\((R_{M_q})_{M_q\in\Ms}\).
  Denote \(\smash{R^{q*}=((a^{q*}_k,v^{q*}_k,b^{q*}_k))_{k=1}^{s_q^*}}\),
  and \(R^q=((a_k^q,v_k^q,b_k^q))_{k=1}^{s_q}\)
  for each machine~\(M_q\).
  By \cref{def:presc}\eqref{presc1} and \cref{def:ros}\eqref{def:feas4}, for each job~\(J_i\) and machine~\(M_q\) there is an index \(P(J_i,M_q):=p\) of a pre-stay~\(({q_p},w_p)=(q,\loc(J_i))\) on~\(T\) such that
  \begin{align}
    a^{q*}_{\sigma_p}\leq S^*(J_i,M_q)< b^{q*}_{\sigma_p}.\label{leqeq}
  \end{align}
  Since the routes \((R^{q*})_{M_q\in\Ms}\) and \((R^q)_{M_q\in\Ms}\) comply with~\(T\), by \cref{def:presc}\eqref{presc1}, one has \(s_q=s^*_q\) and, moreover, \(v_k^q=v_k^{q*}\) for each machine~\(M_q\) and \(1\leq k\leq s_q\).
For each job~\(J_i\) and machine~\(M_q\), we define
  \begin{align*}
    S(J_i,M_q):=
    \begin{cases}
      \bot&\text{if \(\loc(J_i)\)~is not critical},\\
      S^*(J_i,M_q)-a^{q*}_{\sigma_p}+a^{q}_{\sigma_p}&\text{otherwise,}
    \end{cases}
  \end{align*}
where \(p=P(J_i,M_q)\).

We show that \(S\)~is a critical schedule compatible with the routes \((R^q)_{M_q\in\Ms}\).
For each job~\(J_i\) in a critical vertex and each machine~\(M_q\),
we first show that machine~\(M_q\) stays in~\(\loc(J_i)\) when processing job~\(J_i\).
More precisely, for  \(p=P(J_i,M_q)\),
we show \(a^{q}_{\sigma_p}\leq S(J_i,M_q)<b^{q}_{\sigma_p}\) as follows.
By adding \(-a^{q*}_{\sigma_p}+a^{q}_{\sigma_p}\) to both sides of
\[  a^{q*}_{\sigma_p}\leq S^*(J_i,M_q),\]
which holds since \(p\)~is chosen so as to satisfy \eqref{leqeq}, one gets
\[  a^{q}_{\sigma_p}\leq S^*(J_i,M_q)-a^{q*}_{\sigma_p}+a^{q}_{\sigma_p}=S(J_i,M_q).\]
 Moreover, since both \(R^{q*}\) and \(R^q\) comply with the length assignment~\(A\), by \cref{def:presc}\eqref{la1}, one has \(b_{\sigma_k}^{q_k}-a_{\sigma_k}^{q_k}=A(k)=b^{q_k*}_{\sigma_k}-a^{q_k*}_{\sigma_k}\) for all pre-stays \(({q_k},w_k)\) such that \(w_k\)~is critical. Thus, by adding \(-a^{q*}_{\sigma_p}+a^{q}_{\sigma_p}\) to both sides of
\[S^*(J_i,M_q)< b^{q*}_{\sigma_p},\]
which holds since \(p\)~is chosen so as to satisfy \eqref{leqeq}, one gets
\[  S(J_i,M_q)= S^*(J_i,M_q)-a^{q*}_{\sigma_p}+a^{q}_{\sigma_p}< b^{q*}_{\sigma_p}-a^{q*}_{\sigma_p}+a^{q}_{\sigma_p}=b^{q}_{\sigma_p}.\]
It remains to show that~\(S^*\) processes no two jobs at the same time and that no two machines process one job at the same time.  To this end, consider jobs~\(J_i,J_j\) in critical vertices and machines~\(M_q, M_r\).  If \emph{either} \(i=j\) or \(q=r\), then \(S^*(J_i,M_q)\ne S^*(J_j,M_r)\).  Thus, it is sufficient to show that \(S^*(J_i,M_q)\ne S^*(J_j,M_r)\) implies \(S(J_i,M_q)\ne S(J_j,M_r)\).  
To this end, let \(p:=P(J_i,M_q)\) and \(\pi:=P(J_j,M_r)\).
Without loss of generality, assume that~\(p\leq\pi\).  
By \cref{def:presc}\eqref{presc2}, one has \(a^{q}_{\sigma_p}\leq a^{r}_{\sigma_\pi}\). 

If \(b_{\sigma_p}^{q}\leq a^{r}_{\sigma_\pi}\), then \(S(J_i,M_q)\ne S(J_r,M_r)\) follows from
\(
S(J_i,M_q)< b^{q}_{\sigma_p}\leq a^{r}_{\sigma_\pi}\leq S(J_j,M_r).
\)
Otherwise, since \(b^{q}_{\sigma_p}-a^{q}_{\sigma_p}=A(p)\leq 2\nmach-1\) by \cref{def:presc}\eqref{la1}, one has \(a^r_{\sigma_\pi}-a^q_{\sigma_p}\leq b^q_{\sigma_p}-a^q_{\sigma_p}\leq 2\nmach-1\).  Thus, for \(\mathcal K(p,\pi]:=\{k\in\{p+1,\dots,\pi\}\mid ({q_k},w_k)\) is a pre-stay of~\(T\) in a critical vertex\(\}\),  one has, by \cref{def:presc}\eqref{presc2} and \eqref{dis1},
\begin{align}
a^r_{\sigma_\pi}-a^q_{\sigma_p}=\smashoperator{\sum_{k\in\mathcal K(p,\pi]}}D(k)=a^{r*}_{\sigma_\pi}-a_{\sigma_p}^{q*}\label{equality}
\end{align}
since both tours~\((R^q)_{M_q\in\Ms}\) and~\((R^{q*})_{M_q\in\Ms}\)
comply with the displacement~\(D\).  
By adding \(a^r_{\sigma_\pi}-a^{q*}_{\sigma_p}\) to both sides of
\begin{align*}
  S(J_i,M_q)+a^{q*}_{\sigma_p}-a^q_{\sigma_p}&=S^*(J_i,M_q)\ne S^*(J_j,M_r)=S(J_j,M_r)+a^{r*}_{\sigma_\pi}-a^r_{\sigma_\pi},
\end{align*}
which is true by the definition of~\(S\) from~\(S^*\), one obtains
\[
S(J_i,M_q)+a^r_{\sigma_\pi}-a^q_{\sigma_p}\ne S(J_r,M_r)+a^{r*}_{\sigma_\pi}-a^{q*}_{\sigma_p},
\]
and, therefore,  \(S(J_i,M_q)\ne S(J_j,M_r)\) from \eqref{equality}.

\eqref{brf:completable}
We complete any critical schedule~\(S'\) compatible
with the routes~\((R^q)_{M_q\in\Ms}\) into a feasible schedule
compatible with \((R^q)_{M_q\in\Ms}\) as follows.

For each machine~\(M_q\)
and each non-critical vertex~\(v\),
let \(T_{qv}\subseteq\mathbb N\)~be a set of \(n_v\)~arbitrary times
where machine~\(M_q\) stays in~\(v\) according to
route~\(R^q\) and let \(T_v:=\bigcup_{M_q\in\Ms} T_{qv}\).
For each vertex~\(v\in V\), create a bipartite
graph~\(G_{v}=(\Ms\cup T_v,E_{v})\),
where \(E_{v}\)~contains an edge~\(\{M_q,t\}\) between a machine~\(M_q\) and a time \(t\in T_{qv}\)
if and only if \(M_q\) is in~\(v\) at time~\(t\).
Each vertex of~\(G_v\) has degree~\(n_v\) or~\(m\), where \(m\leq n_v\)
since \(v\)~is non-critical.
Thus, \(G_v\)~allows for a proper edge coloring~\(c_v\colon E_{qv}\to \{1,\dots,n_v\}\):
if one of \(q=r\) or \(t=t'\), then \(c_v(\{M_q,t\}\ne c_v(\{M_r,t'\})\).
This coloring will tell us which job machine~\(M_q\) will process at time~\(t\).
Let the jobs in each vertex~\(v\)
be~\(\{J_1^v,\dots,J_{n_v}^v\}\).
Then, for any machine~\(M_q\) and job~\(J_j^v\)
in a non-critical vertex~\(v\),
there is a unique~\(t_{qj}\) such that \(c_v(\{M_q,t_{qj}\})=j\).
We thus define our schedule~\(S\) as
\[
  S(J_i,M_q):=
  \begin{cases}
    S'(J_i,M_q)&\text{ if $J_i$ is in a critical vertex}\\
    t_{qj}&\text{ if $J_i=J_j^v$ for some non-critical vertex~\(v\)}.
  \end{cases}
\]
By construction from schedule~\(S'\) for critical vertices,
which is compatible to the routes~\((R^q)_{M_q\in\Ms}\),
and
from the edge-coloring~\(c_v\) for non-critical vertices~\(v\),
schedule \(S\)~is compatible to~\((R^q)_{M_q\in\Ms}\).
Moreover, from this, \(S(J_i,M_q)\ne S(J_j,M_q)\)
follows if \(J_i\ne J_j\) are in the same vertex.
If \(J_i\ne J_j\) are in different vertices,
then this follows from the compatibility of~\(S\)
with the routes~\((R^q)_{M_q\in\Ms}\): machine~\(M_q\)
cannot stay in two vertices at the same time.
Finally, \(S(J_i,M_q)\ne S(J_i,M_r)\) for \(M_q\ne M_r\)
follows from \(S'(J_i,M_q)\ne S'(J_i,M_r)\) if \(J_i\)~is in a critical vertex.
If \(J_i\)~is a job in a non-critical vertex, say \(J_j^v\),
then \(S(J_i,M_q)=S(J_j,M_r)\) implies \(t_{qj}=t_{rj}\) and,
in turn, \(c_v(\{M_r,t_{qj}\})=j=c_v(\{M_r,t_{rj}\})\),
contradicting the fact that \(c_v\)~is a proper edge coloring.

We analyze the running time for this completion step.
Since the routes~\((R^q)_{M_q\in\Ms}\) have at most
\(\totstays\)~stays, one can compute the sets~\(T_{vq}\)
for all vertices~\(v\) and machines~\(M_q\) in
\(O(\totstays+n)\)~time.
For each vertex~\(v\),
the bipartite graph~\(G_{v}\) can be generated
in \(O(mn_v)\)~time and an edge-coloring into \(n_v\)~colors
can be computed in \(O(mn_v\log n_v)\)~time~\citep{COS01}.
Thus, in total,
we can compute schedule~\(S\) in
\begin{align*}
  O(\totstays+n)+\sum_{v\in V}O(mn_v\log n_v)=O(m^2+gm+mn\log n)\text{\quad time.}\qquad\qed
\end{align*}
\end{proof}

\noindent
We can now prove the correctness and running time of \cref{alg:outline}.

\begin{proof}[of \cref{thm:ros-fpt}]
  Let \(L^*\)~be the makespan of an optimal schedule~\(S^*\)
  for the \ROSUPT{} instance input to \cref{alg:outline}.
  We only have to show that (and in which time) \cref{alg:outline} outputs
  a feasible schedule~\(S\) with makespan at most~\(L^*\).
  To this end, let \((R^{q*})_{M_q\in\Ms}\)~be routes compatible to~\(S^*\),
  each of length at most~\(L^*\).
  Line~\ref{lin:metric} of \cref{alg:outline}
  can be carried out in \(O(g^3)\) time
  using the Floyd-Warshall algorithm,
  line~\ref{lin:hamcyc} in \(O(2^gg^2)\)~time using
  the algorithm of \citet{Bell62}, \citet{HelK62}.  By
  \cref{obs:lobo}, \(L=L^*\) in at least one iteration of the loop in
  line~\ref{lin:chooseL}.
  We now consider this iteration.
  By \cref{numprescheds},  in line~\ref{lin:preschloop}, we enumerate
  \(2^{O(m^2\log gm+mg\log gm)}\)~pre-schedules.
  By \cref{lem:vislength}, among them there is a pre-schedule~\((T^*,A^*,D^*)\)
  that complies with~\((R^{q*})_{M_q\in\Ms}\).
  Since \(S^*\)~processes all jobs, the routes~\((R^{q*})_{M_q\in\Ms}\)
  stay in each vertex~\(v\) at least \(n_v\)~time.
  Thus, the test in line~\ref{lin:consroute} succeeds for~\((T^*,A^*,D^*)\) and \(L=L^*\)
  and, by \cref{lem:complroutes}, 
  can be carried out in \(2^{O(m^2\log gm+mg\log gm)}\)~time.
  By \cref{lem:enumscheds} the test in line~\ref{lin:conspartsched}
  can be carried out in \(2^{\bigO(\nverts\nmach^2\log\nmach)}\)~time
  and, by \cref{lem:brf}\eqref{brf:psched}, it succeeds. We get the
  the feasible schedule~\(S\) in line~\ref{lin:complsched}
  in \(O(m^2+gm+mn\log n)\)~time by \cref{lem:brf}\eqref{brf:completable}.
  Its makespan is at most~\(L=L^*\).
  The overall running time of the algorithm is \(O(g^3+2^gg^2)+m\cdot 2^{O(m^2\log gm+mg\log gm)}\cdot (2^{O(m^2\log gm+mg\log gm)}+ 2^{\bigO(\nverts\nmach^2\log\nmach)})+O(m^2+gm+mn\log n)\).\qed
\end{proof}

\section{Open questions}\label{sec:conclusion}
\noindent
We have shown that \ROSUPT{} is fixed-parameter tractable
with respect to the parameter~\(m+g\) and,
in the absence of critical vertices,
also with respect to the parameter~\(g\).
However, the question
whether \ROSUPT{} with critical vertices
and an unbounded number of machines
is polynomial-time solvable is open even for two vertices.

\begin{acknowledgements}
  We are thankful 
  to Mikhail Khachay for pointing out the work of \citet{Mad74}.
\end{acknowledgements}

\bibliographystyle{spbasic}
\bibliography{ros-upt}

\appendix
\section{On the minimum weight of long closed walks containing all vertices}
\label{appendix}
In the following,
we prove \cref{prop:longcyc},
which we used to prove \cref{lem:shortcyc}.
For its formal statement and proof,
we have to formally distinguish two different kinds of paths and cycles:

\begin{definition}[closed walks, cycles]
  Let \(G=(V,E)\)~be a multigraph with edge weights~\(c\colon E\to\mathbb N\).
  A~\emph{walk of length}~\(\ell\) in~\(G\)
  is an alternating sequence~\((v_1,e_1,v_2,e_2,\dots,v_\ell,e_\ell,v_{\ell+1})\)
  of vertices and edges
  such that
  \(v_i\)~and \(v_{i+1}\)~are the end points of~\(e_i\)
  for each \(i\in\{1,\dots,\ell\}\).
  Its \emph{weight} is \(\sum_{i=1}^{\ell}c(e_i)\), its
  \emph{internal vertices} are \(v_2,\dots,v_\ell\), and
  it is \emph{closed} if \(v_1=v_{\ell+1}\).

  \looseness=-1
  A graph is \emph{Eulerian} if it contains an \emph{Euler tour}---a closed walk
  that contains each edge of~\(G\) \emph{exactly} once.
  It is known that a connected undirected multigraph is Eulerian
  if and only if each vertex has even degree.
  A \emph{(simple) path} is a walk that contains each edge
  and
  internal vertex at most once.
  A \emph{cycle} is a closed path.
\end{definition}

\begin{proposition}\label[proposition]{prop:longcyc}
  Let $G$~be a connected \(n\)-vertex graph with positive integer edge
  weights and let $T$~be
  a closed walk of length \(2n-2+k\) containing
  all vertices of~\(G\).  Then, the weight of~$T$ is at least $W+k$,
  where \(W\)~is the minimum weight
  of any closed walk containing all vertices of~\(G\).
\end{proposition}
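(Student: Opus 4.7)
My plan is to interpret the closed walk~\(T\) as tracing out a connected Eulerian multigraph~\(H\) on the vertex set of~\(G\), where the multiplicity~\(\mu_H(e)\) of each edge~\(e\) records the number of times that~\(T\) traverses~\(e\). Then \(|E(H)|=2n-2+k\), every vertex of~\(G\) has even degree \(\ge 2\) in~\(H\), and \(c(T)=\sum_e \mu_H(e)c(e)\). Any sub-multigraph~\(H'\le H\) (meaning \(\mu_{H'}(e)\le\mu_H(e)\) for every~\(e\)) that is connected, Eulerian, and still contains all vertices corresponds to a closed walk containing all vertices of~\(G\) and thus has weight at least~\(W\).

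The heart of the proof is the following reduction lemma, which I would establish first: if \(H\)~is a connected Eulerian multigraph containing all vertices of~\(G\) and satisfying \(|E(H)|\ge 2n-1\), then there exists a nonempty edge multiset~\(F\subseteq E(H)\) of even degree at every vertex such that \(H-F\)~still has all three properties. To construct~\(F\), I would fix a spanning tree~\(T_s\) in the underlying simple graph of~\(H\) and consider the multigraph~\(H-T_s\) obtained by decrementing~\(\mu_H\) by one on every tree edge. Since \(|E(H-T_s)|\ge n\), it cannot be a multi-forest, so it contains either a simple cycle or a pair of parallel edges; I would take \(F\) to be that cycle (respectively, that parallel pair). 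Then \(F\) has even degree at every vertex, and \(H-F\supseteq T_s\), so \(H-F\) is connected and has every vertex as an endpoint of some edge.

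The subtle point, and what I expect to be the main obstacle, is ensuring that every vertex still has degree \(\ge 2\) in~\(H-F\), so that the corresponding closed walk actually visits it. For a vertex~\(v\) appearing on~\(F\), I would combine \(\deg_{T_s}(v)\ge 1\) with \(\deg_{H-T_s}(v)\ge\deg_F(v)=2\) to obtain \(\deg_H(v)\ge 3\); since \(\deg_H(v)\) is even, actually \(\deg_H(v)\ge 4\), whence \(\deg_{H-F}(v)\ge 2\). Vertices off~\(F\) have unchanged degree. That \(H-F\) is Eulerian is immediate because \(H\) and~\(F\) both have even degree everywhere.

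Granted this reduction lemma, iterating it on~\(H\) yields a sub-multigraph~\(H^{*}\le H\) that is connected, Eulerian, contains all vertices of~\(G\), and has \(|E(H^{*})|\le 2n-2\). Since \(H^{*}\)~corresponds to a closed walk containing all vertices, its weight is at least~\(W\). The discarded edges satisfy \(|E(H)|-|E(H^{*})|\ge k\), and each has weight at least~\(1\) by positivity of the edge weights. Therefore \(c(T)=c(H^{*})+\sum_e(\mu_H(e)-\mu_{H^{*}}(e))c(e)\ge W+k\), which is the claimed bound.
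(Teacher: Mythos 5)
Your proof is correct, and while it follows the same overall architecture as the paper's — encode the closed walk \(T\) as a connected Eulerian multigraph \(H\) with \(2n-2+k\) edges, repeatedly strip off even-degree edge sets while preserving connectivity and the spanning property until at most \(2n-2\) edges remain, then account for the weight of what was removed — it proves the crucial combinatorial step by a genuinely different and more elementary route. The paper establishes the existence of a removable cycle via a structural corollary (any connected loopless multigraph in which deleting the edges of every cycle disconnects the graph has at most \(2n-2\) edges), whose proof invokes Mader's theorem on cycles admitting many internally vertex-disjoint paths between any two of their vertices. You instead peel off a spanning tree \(T_s\) and observe that the remaining \(\ge n\) edges on \(n\) vertices must contain a simple cycle or a parallel pair \(F\); since \(H-F\) still contains \(T_s\), connectivity and spanning are immediate, and evenness of degrees is preserved because \(F\) has even degree everywhere. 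This replaces a nontrivial external theorem with a first-principles counting argument, and in fact your spanning-tree argument would also yield a shorter proof of the paper's Corollary on cutting cycles (a parallel pair is a cycle in the multigraph sense used there). Your extra worry about degrees staying at least \(2\) is resolved correctly, though it is not strictly needed: \(H-F\) contains the spanning tree, so every vertex has positive degree, and an Euler tour of a connected Eulerian multigraph visits every vertex of positive degree. The only thing the paper's detour buys is the standalone Corollary as a statement about arbitrary multigraphs; for the proposition itself your argument is cleaner and self-contained.
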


\noindent
To prove \cref{prop:longcyc}, we exploit the following theorem.
\begin{theorem}[{\citet[Satz~1]{Mad74}}]\label{mader}
  Every simple graph with minimum degree~\(k\geq 2\) contains a cycle~\(C\)
  such that there are \(k\)~mutually internally vertex-disjoint paths
  between any pair of vertices of~\(C\).
\end{theorem}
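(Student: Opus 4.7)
The plan is to prove Mader's theorem by induction on~$|V(G)|$. A minimum-order counterexample $G$ would have $\delta(G)\ge k\ge 2$ but no cycle satisfying the conclusion. First I would dispose of the case in which $G$ is itself $k$-connected: by Menger's theorem, every pair of vertices of $G$ is then joined by $k$ internally vertex-disjoint paths in~$G$, and $\delta(G)\ge 2$ forces $G$ to contain a cycle (take a longest path and close it via the endpoint's neighbor on the path), so any cycle satisfies the conclusion. Hence we may assume $G$ has a vertex separator $S$ of minimum size $s<k$; let $A_1,\dots,A_r$ with $r\ge 2$ be the components of~$G-S$. A short argument from the minimality of $S$ shows that each $v\in S$ has a neighbor in every~$A_j$, since otherwise $S\setminus\{v\}$ would already separate some $A_j$ from the rest.

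Next I would reduce to smaller graphs using this separator. For each component~$A_i$, form an auxiliary graph $G_i$ by taking $G[V(A_i)\cup S]$, adding a new vertex $v_i^\star$ adjacent to every vertex of~$S$ (representing the rest of~$G$), and adding further edges inside $S\cup\{v_i^\star\}$ (making it a clique works) so that $\delta(G_i)\ge k$. A simple counting argument shows that one may pick~$i$ with $|V(G_i)|<|V(G)|$: the only obstruction would be $r=2$ with $|A_1|=|A_2|=1$, which would force $G=K_{k+1}$ and hence~$G$ to be $k$-connected, contradicting the current case. Applying the induction hypothesis to~$G_i$ yields a cycle $C_i$ in~$G_i$ whose vertex set is pairwise $k$-linked in~$G_i$. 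A short rerouting, replacing any occurrence of~$v_i^\star$ or of an added clique edge on~$C_i$ by a path through some~$A_j$ with~$j\ne i$ (using the guaranteed neighbors of~$S$ in~$A_j$ together with the connectedness of~$A_j$), turns~$C_i$ into an actual cycle~$C$ of~$G$, which will be the candidate cycle.

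The main obstacle, and the bulk of the proof, is the \emph{lifting step}: verifying that for any two vertices $u,v$ of~$C$ the $k$ internally vertex-disjoint $u$-$v$ paths guaranteed in~$G_i$ can be converted into $k$ internally vertex-disjoint $u$-$v$ paths in~$G$. Paths that avoid the virtual vertex~$v_i^\star$ and the added clique edges inside $S\cup\{v_i^\star\}$ carry over directly; the other paths must be rerouted through~$\bigcup_{j\ne i}A_j$, where each virtual segment between two vertices of~$S$ has to be replaced by an actual $S$-to-$S$ path in some~$A_j$. The fact that every~$s\in S$ has a neighbor in every~$A_j$, combined with Menger's theorem applied inside each $G[V(A_j)\cup S]$, supplies many internally disjoint $S$-to-$S$ routes through each~$A_j$, which should provide enough independent replacements so that all the rerouted paths remain internally vertex-disjoint from each other and from the unaltered ones. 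I expect the cleanest way to execute this step is not by an explicit combinatorial exchange but by encoding the trusted paths in~$G_i$ together with the routing potential of the other components into a capacitated auxiliary network and invoking the max-flow/min-cut theorem to extract the $k$ paths in one shot; showing that no cut of size less than~$k$ can arise in this network is where the full strength of $\delta(G)\ge k$ together with the minimality of~$S$ is ultimately used.
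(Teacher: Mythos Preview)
The paper does not prove this theorem; it is quoted verbatim from \citet{Mad74} and used as a black box in the proof of \cref{cor:shortcut}. So there is no ``paper's own proof'' to compare your attempt to. I will therefore only comment on the soundness of your sketch.

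There is a concrete gap in your construction of the auxiliary graph~\(G_i\). You add a single new vertex~\(v_i^\star\) adjacent to every vertex of the separator~\(S\) and then complete \(S\cup\{v_i^\star\}\) to a clique, asserting that this forces \(\delta(G_i)\ge k\). It does not: the degree of~\(v_i^\star\) in~\(G_i\) is exactly~\(|S|=s<k\), so \(\delta(G_i)\le s<k\) and the induction hypothesis does not apply to~\(G_i\). Patching this by adjoining further dummy vertices to raise the degree of~\(v_i^\star\) is not innocent either: each extra vertex enlarges~\(G_i\), which can destroy the inequality \(|V(G_i)|<|V(G)|\) on which your induction relies, and every dummy vertex that ends up on~\(C_i\) or on one of the \(k\) paths creates an additional artificial segment that must be rerouted in the lifting step.

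Even with the degree issue set aside, the lifting step as written is only a hope, not an argument. The \(k\) internally vertex-disjoint paths in~\(G_i\) may use several clique edges inside~\(S\cup\{v_i^\star\}\), and each such segment must be replaced by a genuine path through some~\(A_j\) with \(j\ne i\) while keeping all \(k\) paths internally vertex-disjoint. Your suggestion to package this as a max-flow problem and appeal to min-cut is plausible in spirit, but you have not identified the network, and the crucial claim that ``no cut of size less than~\(k\) can arise'' is precisely the content of the theorem for the pair~\(u,v\) in~\(G\); it is not obvious that the hypothesis \(\delta(G)\ge k\) alone rules out small cuts in the network you have in mind. Mader's original argument proceeds differently (via an extremal subgraph rather than a separator-based induction), and if you want to push your route through you will need a more careful auxiliary construction that simultaneously keeps \(\delta\ge k\), keeps~\(G_i\) strictly smaller, and makes the rerouting provably succeed.
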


\begin{corollary}\label[corollary]{cor:shortcut}
  Let $G$~be a connected \(n\)-vertex multigraph without loops
  such that
  the deletion of the edges of any cycle disconnects~$G$.
  Then $G$~has at most $2n-2$~edges.
\end{corollary}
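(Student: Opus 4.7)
The plan is to induct on $n$. The base cases $n\le 2$ are immediate, since three or more parallel edges between two vertices would form a digon whose deletion leaves the remaining edge still connecting the two vertices, contradicting the hypothesis.

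For $n\ge 3$, I would carry out the following case analysis. First, if $G$ has a bridge, removing it splits $G$ into two connected subgraphs still satisfying the hypothesis, and inducting on each side gives $|E(G)|\le 2n-3$. Second, if $G$ is bridgeless but has a pair of parallel edges between some $u,v$, these form a digon whose removal disconnects $G$ by hypothesis and is therefore a 2-edge cut; the two sides of this cut induce good connected subgraphs, and inducting on both yields $|E(G)|\le 2n-2$. Third, if $G$ is simple, 2-edge-connected, and has a vertex $v$ of degree $2$ with distinct neighbours $u,w$, then $G-v$ is connected (since $\{uv,vw\}$ was the only cut separating $\{v\}$ from the rest) and still satisfies the hypothesis: for any cycle $C'$ of $G-v$, the disconnection of $G$ by $E(C')$ must leave $v$ in the same component as $u$ and $w$ (both its edges survive), so at least one further component of $G-E(C')$ persists in $G-v-E(C')$. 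The inductive hypothesis applied to $G-v$ then gives $|E(G)|=|E(G-v)|+2\le 2n-2$.

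The remaining and main case is $G$ simple, 2-edge-connected, of minimum degree at least $3$. Here the plan is to derive a contradiction using Mader's theorem (\Cref{mader}) with $k=3$: it provides a cycle $C$ such that every pair of vertices in $V(C)$ admits three internally vertex-disjoint, hence edge-disjoint, paths in $G$. I would show $G-E(C)$ is connected, contradicting goodness. Any vertex outside $V(C)$ reaches $V(C)$ in $G-E(C)$ via a shortest path in $G$ to $V(C)$, which cannot use any cycle edge by minimality. To place all of $V(C)$ in a single component of $G-E(C)$, I would argue by contradiction: any cut $S\subseteq E(C)$ of $G$ separating some $x,y\in V(C)$ would have even size (cycles cross every cut an even number of times) yet contain at least three edge-disjoint $x$-$y$ crossings coming from the Mader paths, so $|S|\ge 4$. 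Choosing $x,y$ adjacent on $C$ with $xy\in S$, selecting $xy$ itself as one Mader path, and applying edge-disjointness to the other two paths exhausts at least two further distinct cut edges and, combined with the even-parity constraint on $|S|$, delivers the required contradiction. Finalizing this combinatorial step against the cycle structure is where I expect the main difficulty of the proof to lie.
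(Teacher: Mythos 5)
Your reductive cases are sound and essentially reproduce what the paper achieves with a single ``cutset of cardinality at most two'' argument: a bridge, the two edges of a digon, and the two edges at a degree-two vertex are each small edge cuts along which one can split and induct. Your variant of deleting the degree-two vertex $v$ instead of splitting along its two edges also works, and you correctly verify that $G-v$ inherits the hypothesis. The problem is the main case, which you yourself flag as unfinished, and your sketched endgame does not close it. Suppose $G-E(C)$ is disconnected and let $S\subseteq E(C)$ be the resulting edge cut separating two vertices $x,y$ of $C$. From ``$|S|$ is even'' and ``$S$ meets three pairwise edge-disjoint Mader paths'' you obtain only $|S|\ge 4$; an even cut of size four or more lying entirely inside $E(C)$ is in no way contradictory, and designating the edge $xy$ as one of the three paths still only accounts for three distinct cut edges and hence still only gives $|S|\ge 4$. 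Nothing in your setup bounds $|S|$ from above, so no contradiction can come out of this count. Since the corollary is exactly the assertion that a minimum-degree-three counterexample cannot exist, this unproved step is the whole content of the remaining case, and the gap is fatal.

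The paper closes this case by a different, non-counting argument: it claims that deleting $E(C)$ can destroy at most two of the three mutually internally vertex-disjoint $x$--$y$ paths guaranteed by \cref{mader}, so for every pair of vertices of $C$ at least one connecting path survives in $G-E(C)$; hence all of $V(C)$ lies in a single component, and since every component of $G-E(C)$ must contain a vertex of $C$ (only edges with both ends on $C$ were removed), the whole graph stays connected, contradicting the hypothesis. If you want to finish along your lines, you need an argument of this ``one path avoids $E(C)$ entirely'' type, exploiting that the three paths are internally vertex-disjoint and that every edge of $C$ has both endpoints on $C$, rather than a parity count of cut edges.
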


\begin{proof}
  We prove the statement by induction on~\(n\).
  The statement is trivial for $n=1$,
  since such a graph has no edges.
  Now, let \(n>1\).
  If \(G=(V,E)\)~contains a \emph{cutset}~\(E'\subseteq E\)
  of cardinality at most 2
  (that it, its deletion disconnects the graph), then
  $G\setminus E'=(V,E\setminus E')$~consists of
  two connected components~$G_1$ and~$G_2$.
  By induction, we have 
  $|E(G)|=|E(G_1)|+|E(G_2)|+|E'|\le 2|V(G_1)|-2+2|V(G_2)|-2+2=2n-2$.
  It remains to prove that $G=(V,E)$~indeed contains
  a cutset~$E'\subseteq E$ of cardinality at most 2.

  If \(G\)~contains a vertex~\(v\) of degree at most~2,
  then its incident edges form the sought cutset of cardinality~2.
  If $G$~contains at least two distinct edges~\(e_1\) and~\(e_2\)
  between the same pair~\(u\) and~\(v\) of vertices,
  then \((u,e_1,v,e_2,u)\)
  is a cycle and, thus, $E'=\{e_1,e_2\}$ is the sought cutset of cardinality~2.
  
  If none of the above apply,
  then \(G\)~is a simple graph with minimum degree three.
  Thus, by \cref{mader},
  \(G\)~contains a cycle~\(C\) whose edges can be deleted
  without disconnecting the graph:
  deleting the edges of~\(C\) removes
  at most two out of three pairwise internally vertex-disjoint paths
  between any pair of vertices of~\(C\).
  This contradicts our assumption that deleting the edges of any cycle disconnects~\(G\).
  \qed
\end{proof}

\noindent We can now prove \cref{prop:longcyc}.
 
\begin{proof}[of \cref{prop:longcyc}]
  Consider the following multigraph~$H$:
  the vertices of~\(H\) are the vertices of~\(G\)
  and
  the number of edges in~\(H\) between
  each pair~\(u\) and~\(v\) of vertices
  is equal to the number of times
  the closed walk~\(T\) contains the edge~$\{u,v\}$ of~\(G\).
  The multigraph~$H$ is connected, Eulerian, has \(n\)~vertices
  and $2n-2+k$ edges.
  If $k>0$, then, by \cref{cor:shortcut},
  there is a cycle~$C$ in~$H$ whose removal results
  in a connected multigraph~\(H'\).
  Multigraph~\(H'\) is still Eulerian.
  Thus,
  we can iterate the process until
  we get an Eulerian submultigraph~\(H^*\) of~\(H\) with at most $2n-2$~edges.
  The Euler tour~$T^*$ of~\(H^*\)
  is a closed walk for~\(G\) containing all its vertices
  and thus
  has weight at least~\(W\).
  Since the total weight of the deleted cycles is at least~$k$,
  the weight of~$T$ is at least $W+k$.
 \qed
\end{proof}
Finally, note that the bound given by \cref{prop:longcyc}
is tight for each~\(n\) and even~\(k\):
consider a unit-weight path graph~\(G\) on \(n\)~vertices \(v_1,\dots,v_n\).
The minimum weight of any closed walk containing all vertices is~\(W:=2n-2\).
The closed walk visiting the vertices
\(v_1,v_2,\dots,v_{n-1},v_n,v_{n-1},\dots,v_2,v_1,v_2,\dots,v_{k'-1},v_{k'},v_{k'-1},\dots,\allowbreak v_2,v_1\)
with \(k':=(k+2)/2\) in this order
has length~\(2n-2+2k'-2=2n-2+k\) and its weight is~\(W+2k'-2=W+k\).

\end{document}